\author{
	Karl Bringmann\thanks{\texttt{kbringma@mpi-inf.mpg.de} Max Planck Institute for Informatics, Saarland Informatics Campus, Saarbr\"ucken, Germany.}
	\quad
	Vasileios Nakos\thanks{\texttt{vnakos@mpi-inf.mpg.de}, Max Planck Institute for Informatics, Saarland Informatics Campus, Saarbr\"ucken, Germany.} 
}
\date{}
\title{Fast $n$-fold Boolean Convolution via Additive Combinatorics\footnote{This work is part of the project TIPEA that has received funding from the European Research Council (ERC)
under the European Unions Horizon 2020 research and innovation programme (grant agreement No. 850979).}}
\newtheorem{theorem}{Theorem}[section]
\newtheorem{lemma}[theorem]{Lemma}
\newtheorem{definition}[theorem]{Definition}
\newtheorem{corollary}[theorem]{Corollary}
\newtheorem{remark}[theorem]{Remark}
\newtheorem{claim}[theorem]{Claim}
\newcommand{\ostar}{\mathbin{\mathpalette\make@circled\star}}
\newcommand{\make@circled}[2]{%
  \ooalign{$\m@th#1\smallbigcirc{#1}$\cr\hidewidth$\m@th#1#2$\hidewidth\cr}%
}
\newcommand{\smallbigcirc}[1]{%
  \vcenter{\hbox{\scalebox{0.77778}{$\m@th#1\bigcirc$}}}%
}
\newcommand{\wt}{\widetilde}
\newcommand{\tOh}{\widetilde{O}}
\newcommand{\conv}{\star}
\newcommand{\convBool}{\ostar}
\newcommand{\R}{\mathbb{R}}
\renewcommand{\varepsilon}{\epsilon}
\renewcommand{\tilde}{\wt}
\newcommand{\Z}{{\mathbb{Z}}}
\DeclareMathOperator*{\polylog}{polylog}
\DeclareMathOperator{\poly}{poly}
\newcommand*{\RN}[1]{\expandafter\@slowromancap\romannumeral #1@}
\newcommand{\define}[4][ignore]{%
  \ifstrequal{#1}{ignore}{}{
  \@namedef{thmtitle@#2}{#1}}%
  \@namedef{thm@#2}{#4}%
  \@namedef{thmtypen@#2}{lemma}%
  \newtheorem{thmtype@#2}[theorem]{#3}%
  \newtheorem*{thmtypealt@#2}{#3~\ref{#2}}%
}
\newcommand{\state}[1]{%
  \@namedef{curthm}{#1}
  \@ifundefined{thmtitle@#1}{
  \begin{thmtype@#1}
    }{
  \begin{thmtype@#1}[\@nameuse{thmtitle@#1}]
  }
    \label{#1}
    \@nameuse{thm@#1}
  \end{thmtype@#1}
  \@ifundefined{thmdone@#1}{
  \@namedef{thmdone@#1}{stated}%
  }{}
}
\newcommand{\restate}[1]{%
  \@namedef{curthm}{#1}
  \@ifundefined{thmtitle@#1}{
    \begin{thmtypealt@#1}
    }{
  \begin{thmtypealt@#1}[\@nameuse{thmtitle@#1}]
  }
    \@nameuse{thm@#1}
  \end{thmtypealt@#1}
  \@ifundefined{thmdone@#1}{
  \@namedef{thmdone@#1}{stated}%
  }{}
}
\newcommand{\thmlabel}[1]{
  \@ifundefined{thmdone@\@nameuse{curthm}}{\label{#1}
    }{\tag*{\eqref{#1}}}
}
\begin{document}

\begin{titlepage}
  \maketitle
  \begin{abstract}

We consider the problem of computing the Boolean convolution (with wraparound) of $n$~vectors of dimension $m$, or, equivalently, the problem of computing the sumset $A_1+A_2+\ldots+A_n$ for $A_1,\ldots,A_n \subseteq \mathbb{Z}_m$. 
Boolean convolution formalizes the frequent task of combining two subproblems, where the whole problem has a solution of size $k$ if for some $i$ the first subproblem has a solution of size~$i$ and the second subproblem has a solution of size $k-i$. Our problem formalizes a natural generalization, namely combining solutions of $n$ subproblems subject to a modular constraint. This simultaneously generalises Modular Subset Sum and Boolean Convolution (Sumset Computation). Although nearly optimal algorithms are known for special cases of this problem, not even tiny improvements are known for the general case. 

We almost resolve the computational complexity of this problem, shaving essentially a factor of $n$ from the running time of previous algorithms.
Specifically, we present a \emph{deterministic} algorithm running in \emph{almost} linear time with respect to the input plus output size $k$. 
We also present a \emph{Las Vegas} algorithm running in \emph{nearly} linear expected time with respect to the input plus output size $k$. 
Previously, no deterministic or randomized $o(nk)$ algorithm was known. 

At the heart of our approach lies a careful usage of Kneser's theorem from Additive Combinatorics, and a new deterministic almost linear output-sensitive algorithm for non-negative sparse convolution. In total, our work builds a solid toolbox that could be of independent interest.

  \end{abstract}
  \thispagestyle{empty}
\end{titlepage}

\newpage


\section{Introduction}

In this paper we study $n$-fold variants of the following fundamental 2-fold problems.

\subsection{2-Fold Case}

\subparagraph*{Boolean Convolution and Sumset Computation.}
In Boolean convolution we are given vectors $A,B \in \{0,1\}^m$ and the task is to compute the vector $C = A \convBool B \in \{0,1\}^m$ defined by $C[k] = \bigvee_i A[i] \wedge B[k-i]$. This formalizes a situation in which we split a computational problem into two subproblems, so that in total there is a solution of size $k$ if and only if for some $i$ there is a solution of the left subproblem of size $i$ and there is a solution of the right subproblem of size $k-i$. 
This is a natural task that frequently arises in algorithm design.
There are two variants of this problem: \emph{Without wraparound} the $\bigvee_i$-quantifier goes over $0 \le i \le k$; \emph{with wraparound} the quantifier goes over all $i \in [m]$ and the entry $B[k-i]$ means $B[(k-i) \bmod m]$. Algorithmically the two variants are equivalent, and throughout this paper we study the latter variant.

An equivalent problem is \emph{sumset computation}: Given sets $A,B \subseteq \Z_m$, compute their sumset $A+B$, which denotes the set of all sums $a+b$ modulo $m$ with $a \in A, b \in B$. This corresponds to Boolean convolution with wraparound\footnote{\label{foot:one} By removing the modulo operation and thus working over $\Z$ we can also pose a problem variant corresponding to Boolean convolution without wraparound. Again, algorithmically these variants are equivalent, since for any $A,B \subseteq \{0,1,\ldots,m-1\}$, on the one hand computing $A+B$ over $\Z$ and taking the result modulo $m$ yields $A+B$ over $\Z_m$, and on the other hand computing $A+B$ over $\Z_{2m}$ yields $A+B$ over $\Z$.}. 

\subparagraph*{Standard Convolution and Polynomial Multiplication.}
In (standard) convolution we are given vectors $A,B \in \R^m$ and the task is to compute the vector $C = A \conv B \in \R^m$ with $C[k] = \sum_i A[i] \cdot B[k-i]$. For instance, if $A[i]$ and $B[i]$ count the number of size-$i$ solutions of the left and right subproblem, then $C[k]$ counts the number of size-$k$ solutions of the whole problem.
Again one can consider variants with or without wraparound.
A typical restriction are \emph{non-negative} entries, which is well-motivated in case that $A,B,C$ represent numbers of solutions.

This problem is equivalent to \emph{polynomial multiplication}: Given the coefficients of polynomials $P(X) = \sum_{i=0}^m A[i] \cdot X^i$ and $Q(X) = \sum_{i=0}^m B[i] \cdot X^i$, compute the coefficients of their product $P \cdot Q$.

\subparagraph*{State of the Art.}
Using Fast Fourier Transform (FFT), all of the above problems can be solved in time $O(m \log m)$. 
A long line of work has considered these problems in a sparse setting, called sparse convolution or sparse polynomial multiplication, see, e.g.,~\cite{mut95,CH02,roche2008adaptive,monagan2009parallel,van2012complexity,AR15,CL15,roche2018can,N19,giorgi20,BFN21}. Here the task is to compute the convolution of two sparse vectors much faster than performing FFT, ideally in near-linear time in terms of the input plus output size (i.e., the number of non-zero entries of the input and output vectors). Near-linear in the input plus output size running time was achieved for vectors with non-negative entries by Cole and Hariharan~\cite{CH02} and for general vectors in~\cite{N19}, see also~\cite{giorgi20} for additional $\log m$ factors improvements. Very recently, a Monte Carlo $O(k \log k)$-time algorithm has been achieved in~\cite{BFN21} for non-negative convolution, where $k$ is the input plus output size. 
Sparse convolution techniques are crucially used in~\cite{AKP07,AKPR14,abp14,CL15,ABJTW19,BN20}, and are also relevant to the study of sparse wildcard matching, a fundamental string problem~\cite{cs98,CH02}.

However, all known algorithms for these sparse problems are randomized, and thus an open problem is to \emph{close the gap between deterministic and randomized algorithms}. This was explicitly posed as an open problem in~\cite[Remark 8.2]{CL15}.

\subparagraph{Our Contribution to the 2-Fold Case.}
We present a deterministic algorithm for convolution of non-negative vectors (and thus also for Boolean convolution) running in time $k \cdot m^{o(1)}$, where $k$ is the input plus output size. 
This matches up to the $m^{o(1)}$ term the best known algorithms in the randomize case~\cite{CH02}. Our algorithm heavily builds upon an algorithm by Chan and Lewenstein~\cite{CL15}, which operates under the additional assumption that a small superset of the non-negative terms is known in advance. We remove their assumption by gradually building the sumset using calls to their algorithm. 
\begin{theorem}[Deterministic Non-Negative Sparse Convolution, Section~\ref{sec:sumset_comp}] \label{thm:sumset_deterministic}
Denote by $\|x\|_0$ the number of non-zero entries of a vector $x$.
Given vectors $A,B \in \mathbb{R}_{\ge 0}^m$, we can compute their convolution $A \conv B$ (with wraparound) in time $\| A \conv B \|_0 \cdot m^{o(1)}$ by a deterministic algorithm. More precisely, the running time is
		$ \|A \conv B\|_0\cdot 2^{ O(\sqrt{ \log \|A \conv B\|_0 \log \log m}) }$.
\end{theorem}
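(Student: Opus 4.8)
The plan is to reduce the general non-negative sparse convolution problem to the setting handled by Chan and Lewenstein~\cite{CL15}, whose algorithm computes $A\conv B$ in near-linear time \emph{provided} a moderately small superset $S \supseteq \supp(A\conv B)$ is known in advance. The obstacle, of course, is that we do not know $\supp(A\conv B)$ ahead of time, and computing it directly is essentially the problem we are trying to solve. I would overcome this by a bootstrapping / doubling strategy: build up knowledge of the support incrementally, so that at each stage we have a candidate superset that is only polynomially larger than the true support, invoke the Chan--Lewenstein routine on that superset, verify the output, and if the superset was too small, double a relevant parameter and repeat. Since the number of doublings is $O(\log m)$ and each stage costs near-linear time in the (current) support-size estimate times an $m^{o(1)}$ overhead, the total is $\|A\conv B\|_0 \cdot m^{o(1)}$, and in fact the sharper $2^{O(\sqrt{\log \|A\conv B\|_0 \log\log m})}$ bound follows by balancing the recursion depth against the per-level blow-up, exactly the kind of tradeoff that produces a $\sqrt{\log\cdot\log\log}$ exponent.

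In more detail, the first step is to handle the \emph{support-size estimation} subproblem deterministically: given a guess $t$, decide whether $\|A\conv B\|_0 \le t$ and if so produce a superset of size $O(t \cdot \poly\log m)$ (or $t \cdot m^{o(1)}$). The natural device here is a deterministic linear hashing / coordinate-collapsing scheme: project $\Z_m$ onto $\Z_p$ for a suitable prime $p \approx t \cdot \poly\log m$, compute the (dense, hence cheap) convolution of the projected vectors via FFT in time $O(p\log p)$, and read off which residues are hit; the preimages of the hit residues form a candidate superset of size $O(m/p \cdot (\text{number of hit residues}))$, which we then refine by iterating the projection with several coprime moduli (a CRT-style / Prony-style identification argument) to pin down the actual support. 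This must be done carefully to stay deterministic — the randomized algorithms would just pick a random prime — so one uses a small \emph{family} of moduli guaranteeing that every element of the true support is isolated in at least one projection; a counting argument bounds the needed number of moduli by $\poly\log m$.

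The second step is the main recursion. Starting from $t = O(1)$ and doubling, run the estimation step; once it succeeds we hold a superset $S$ with $|S| = t \cdot m^{o(1)}$; feed $S$ to the Chan--Lewenstein algorithm to obtain a candidate output $\tilde C$ supported on $S$; verify correctness deterministically — here one exploits non-negativity, since $\supp(A\conv B) = \supp(A)+\supp(B)$ is purely set-theoretic, so $\tilde C$ is correct iff $\supp(\tilde C) \supseteq \supp(A)+\supp(B)$, which can be checked by a single further hashed convolution — and if it fails, double $t$ and continue. The loop terminates when $t \ge \|A\conv B\|_0$, i.e. after $O(\log\|A\conv B\|_0)$ iterations, and the dominant iteration costs $\|A\conv B\|_0 \cdot m^{o(1)}$; summing the geometric series preserves this bound.

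The step I expect to be the genuine difficulty is the \emph{deterministic} superset-construction with only $m^{o(1)}$ blow-up: the randomized version is a one-line Chernoff/Markov argument, but derandomizing it — while simultaneously keeping the multiplicative overhead subpolynomial rather than $\poly\log m$ per level (which over $\log m$ levels would already be $\poly\log m$, acceptable, but we want the tighter $2^{O(\sqrt{\cdot})}$ form) — requires threading the parameters of the modulus family through the recursion so that depth $d$ and per-level cost $m^{O(1/d)}$-style factors balance at $d \approx \sqrt{\log\|A\conv B\|_0/\log\log m}$. Getting that balance right, and ensuring the Chan--Lewenstein precondition (superset not too large relative to true output) is met at \emph{every} level, is where the technical care goes.
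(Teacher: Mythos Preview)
Your high-level reduction is right: both you and the paper reduce to the Chan--Lewenstein FFT Lemma, and both recognize that the whole game is building a small superset of $\supp(A\conv B)$ deterministically. But your proposed mechanism for building that superset --- hashing to several small prime moduli, CRT-style identification, a doubling search on the output size, and a separate verification pass --- is far more elaborate than what the paper actually does, and the part you yourself flag as ``the genuine difficulty'' (a deterministic modulus family with only $m^{o(1)}$ blow-up, parameters threaded through a balanced recursion) is never actually carried out in your sketch.

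The paper sidesteps that difficulty entirely with a one-line recursion on the \emph{modulus} rather than on a size guess. Assume $m$ is a power of $2$ and set $m'=m/2$. Recursively compute $S=(A\bmod m')+(B\bmod m')$ over $\Z_{m'}$. Since $\max(A)+\max(B)<2m\le 4m'$, the set $T=S+\{0,m',2m',3m'\}$ is a superset of $A+B$ over $\Z$, with $|T|\le 4|S|\le 4|A+B|$. Feed $T$ to the FFT Lemma; done. The recursion has depth $O(\log m)$, and at every level the superset is at most four times the true output, so the per-level cost is already $|A+B|\cdot 2^{O(\sqrt{\log|A+B|\log\log m})}$ straight from the FFT Lemma --- no parameter balancing is needed. (The extra $\polylog m$ from the recursion depth is absorbed by switching to the naive $\tilde O(|A|\cdot|B|)$ algorithm when $|A+B|<\log m$.) Finally, non-negative convolution follows by first computing the sumset of the supports via this routine and then invoking the FFT Lemma once more with that sumset as the superset.

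So your plan is not wrong in spirit, but it misses the key simplification: you never need to estimate the output size, hash to several primes, or verify anything, because halving the modulus gives you a constant-factor superset for free at every scale.
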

Observe that $\|A\|_0, \|B\|_0 \le \|A \conv B\|_0$, and thus rather than bounding the running time in terms of the input plus output size $\|A\|_0 + \|B\|_0 + \|A \conv B\|_0$, it suffices to bound the running time in terms of only the output size $\|A \conv B\|_0$.
Moreover, note that since $\|A \conv B\|_0 \le m$ the above running time is bounded by $m^{1+o(1)}$. As an additional bonus, our approach gives a quite simple $\|A \conv B\|_0 \cdot \polylog(m)$-time Las Vegas algorithm for the $2$-fold case of non-negative sparse convolution, see Theorem~\ref{lemma:nonneg_conv}.

We leave it as an open problem whether similarly efficient deterministic algorithms exist under the presence of negative entries.

\subsection{\boldmath$n$-Fold Case}

The focus of this paper is on $n$-fold generalizations of the above problems. 
Indeed, in typical applications we do not only split a problem into two subproblems, but these subproblems are recursively split into further subproblems. 
If the recursion tree has $n$ leaves, we therefore want to compute Boolean convolutions of the form $A_1 \convBool \ldots \convBool A_n$ for vectors $A_1,\ldots,A_n$.

Note that now the ``gold standard'' would be linear running time in terms of the total input plus output size $k = \|A_1\|_0+\ldots+\|A_n\|_0+\|A_1 \convBool \ldots \convBool A_n\|_0$. Note that in contrast to the $2$-fold case, the size of the output is incomparable to the size of the input. 

\subparagraph{A Special Case: Modular Subset Sum}
As an example, consider the Modular Subset Sum problem, where we are given $x_1,\ldots,x_n \in \Z_m$ and a target $t$, and the task is to decide whether for some subset $I \subseteq [n]$ we have $\sum_{i \in I} x_i \equiv t \pmod m$. Observe that the sumset $\{0,x_1\}+\ldots+\{0,x_n\} \subseteq \Z_m$ denotes the set of all attainable subset sums modulo $m$, and thus Modular Subset Sum can be solved by a direct application of $n$-fold sumset computation, which is equivalent to $n$-fold Boolean convolution (with wraparound). 

The state of the art for Modular Subset Sum is as follows. A standard dynamic programming approach solves the problem in time $O(n \cdot m)$. After the first improvements by Koiliaris and Xu~\cite{KoiliarisX17}, Axiotis et al.~\cite{ABJTW19} designed an algorithm running in time $O((n+m) \polylog (n+m))$, which was further simplified, sped up and made deterministic in~\cite{axiotis2021fast,cardinal2021modular}. Those running times match a conditional lower bound based on the Strong Exponential Time Hypothesis~\cite{ABHDD19,ABJTW19}. Moreover, all of the above algorithms can be analyzed to run in time $O(k \polylog m)$, where $k$ is the total input plus output size~\cite{ABJTW19}. 

In other words, for the special case $|A_1| = \ldots = |A_n| = 2$ of $n$-fold sumset computation near-optimal algorithms are known. Furthemore, the techniques crucially exploit the fact that all sets $A_i$ have constant cardinality. The goal of this paper is to investigate the general case without any restrictions on $|A_i|$. Can one move beyond the problem-specific techniques in~\cite{KoiliarisX17,ABJTW19,axiotis2021fast,cardinal2021modular} which seem to apply solely to Modular Subset Sum?

\subparagraph{Naive Approach}
As it is already known how to compute $A \conv B$ (and thus $A \convBool B$) in time near-linear in the output size~\cite{CH02,N19,BFN21}, is there an easy generalization to compute the $n$-fold Boolean convolution $A_1 \convBool \ldots \convBool A_n$? 
Naively, if we compute the $n$-fold convolution in a linear fashion as $(((A_1 \convBool A_2) \convBool A_3) \conv \ldots \convBool A_{n-1}) \convBool A_n$, then each intermediate convolution has input plus output size at most $k$, so using~\cite{CH02} we can bound the total expected running time by $O(nk \polylog m)$. 
Unfortunately, this running time analysis is tight. The issue is that up to $\wt \Omega(n)$ intermediate results may have size $\Omega(k)$.

The same is true if we compute the $n$-fold convolution in a bottom-up tree-like fashion as $((A_1 \convBool A_2) \convBool (A_3 \convBool A_4)) \convBool \ldots$, as shown by the following example. Pick $b = \big\lfloor \frac{ \log m }{ \log \log m} \big \rfloor$ and $\ell = \lceil \log_b(m) \rceil = \Theta(b)$, and set $A_i$ to the indicator vector of $ b^{ i \bmod \ell} \cdot \{0,1,2,\ldots,b-1\}$. Then the Boolean convolution of any $\ell$ consecutive $A_i$'s is the all-ones vector and thus has size $m$, and this holds for $\Omega(n/ \ell) = \Omega\big(n \frac{\log \log m}{\log m}\big)$ intermediate convolutions. On the other hand, the input size is $O(n \frac{\log m}{\log \log m})$ and the output size is $O(m)$.


Analyzing the time only in terms of $n,m$, the naive approach yields time $O(nm \polylog m)$. A simple algorithm using $n-1$ FFTs also yields time $O(nm \log m)$. Before this work it was open whether $n$-fold Boolean convolution can be solved in time close to linear in $n+m$, and close to linear in $k$, or whether the additional factor $\wt \Theta(n)$ of the naive approach is necessary.


\subparagraph{\boldmath$n$-Fold Boolean Convolution versus \boldmath$n$-Fold Convolution.} We note that $n$-fold Boolean convolution is quite different from $n$-fold convolution, and we focus on the former in this paper. The reason is that $n$-fold convolution results in exponentially large entries. Indeed, assuming that $A_1,\ldots,A_n$ are non-negative integer vectors, each with at least two non-zero entries, one can check that $\|A_1 \conv \ldots \conv A_n\|_1 = \|A_1\|_1 \cdot \ldots \cdot \|A_n\|_1 \ge 2^n$ (here $\|x\|_1 = \sum_i|x[i]|$), and thus at least one output entry requires $\Omega(n)$ bits to represent exactly. 
Possible ways to handle this situation are (1) to let $k$ be the total number of input plus output \emph{bits}, (2) assume that entries come from a finite field, or (3) relax to approximation. 
We leave these as open problems and focus on Boolean convolution in this paper.


\subparagraph{Our Contribution to \boldmath$n$-Fold Boolean Convolution.}
We show that the multiplicative factor $n$ in the naive running times $O( n k \polylog m)$ and $O( s+ nm \log m)$ is not necessary (here $s$ is the size of the input). 
Specifically, our approach yields two new results for $n$-fold Boolean convolution: a randomized Las Vegas algorithm running in expected time $O(k \cdot \polylog m)$, and a deterministic algorithm running in time $k \cdot m^{o(1)}$. Morally, we show that one can convolve $n$ Boolean vectors in a much better way than doing $n-1$ FFTs. In particular, in terms of $m,n$ and the size of the input $s$, the known algorithms would run in time $\tOh(s + mn)$, whereas our approach yields time $\tOh(s+m)$. Thus, in instances where the size of the input does not dominate (as in Modular Subset Sum where $s=2n$) our approach yields a substantial improvement.


Our algorithm falls in a line of research that tries to apply results from Additive Combinatorics in algorithm design, such as \cite{GalilM91,CL15,AKKN16, BGNV17,mwm19,BN20}. Quite interestingly, this is the first time that such a connection has produced an (almost) optimal result. Previous algorithms~\cite{GalilM91,CL15,AKKN16, BGNV17, mwm19, BN20} had less clean running time bounds and are thus likely to be suboptimal, partly because of the Additive Combinatorics machinery used. 

\smallskip
We now state our results more formally. 
The main result of this paper is the following. 

\begin{theorem}[$n$-Fold Boolean Convolution] \label{thm:n_fold_conv}
Given vectors $A_1,A_2,\ldots,A_n \in \{0,1\}^m$ we can compute their Boolean convolution with wrap-around $A_1 \convBool A_2 \convBool \ldots \convBool A_n$
\begin{enumerate}
\item[(1)] by a randomized Las Vegas algorithm in $O(k \cdot \polylog (mk))$ expected time, or
\item[(2)] by a deterministic algorithm in $k \cdot  2^{O(\sqrt{\log k \cdot \log \log m})}  $ time
\end{enumerate}
Here, $k := \|A_1\|_0 + \ldots + \|A_n\|_0 + \| A_1 \convBool A_2 \convBool \ldots \convBool A_n \|_0$ is the total input plus output size.
\end{theorem}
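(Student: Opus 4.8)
By the standard reduction (cf.\ Footnote~\ref{foot:one}) it suffices to compute the sumset $\Sigma := A_1 + A_2 + \cdots + A_n$ in $\Z_m$, and we may assume $0 \in A_i$ for every $i$: replace $A_i$ by $A_i - a_i$ for an arbitrary $a_i \in A_i$ and shift the answer by $\sum_i a_i$. Under this assumption the partial sums $S_i := A_1 + \cdots + A_i$ are monotone, $\{0\} = S_0 \subseteq S_1 \subseteq \cdots \subseteq S_n = \Sigma$, and more generally every contiguous sub-product $A_a + A_{a+1} + \cdots + A_b$ is sandwiched inside $\Sigma$ (because $0$ lies in all the omitted sets), hence has at most $K := \|\Sigma\|_0$ non-zero entries, where $K$ is at most the total input-plus-output size $k$. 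The computational engine is Theorem~\ref{thm:sumset_deterministic} (for part~(2)) and its Las~Vegas analogue Theorem~\ref{lemma:nonneg_conv} (for part~(1)): a single sumset $X+Y$ is exactly the support of the non-negative convolution of the indicator vectors of $X$ and $Y$, so it can be computed in output-sensitive time $|X+Y|\cdot\polylog(m)$ in expectation, respectively $|X+Y|\cdot 2^{O(\sqrt{\log|X+Y|\cdot\log\log m})}$ deterministically. The entire task is therefore to realise $\Sigma$ by a \emph{schedule} of pairwise sumset operations between contiguous sub-products (plus a few quotient-group passes), whose output sizes sum to $\tOh(k)$; then the two parts follow by plugging in the two engines.

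The structural heart is Kneser's theorem. Let $H_i := \{g\in\Z_m : g+S_i=S_i\}$ be the stabiliser of $S_i$. Since $S_i = S_{i-1}+A_i \supseteq S_{i-1}$ and $S_{i-1}$ is a union of $H_{i-1}$-cosets, $S_i$ is again a union of $H_{i-1}$-cosets, so $H_0\subseteq H_1\subseteq\cdots\subseteq H_n$ is an increasing chain of subgroups of $\Z_m$, which has at most $\log_2 m+1$ distinct members. This cuts $\{1,\dots,n\}$ into $O(\log m)$ \emph{epochs}, within each of which the stabiliser equals a fixed subgroup $H$; since all relevant sets are then unions of $H$-cosets, the whole epoch is carried out in the cyclic quotient $\Z_m/H$ of smaller order. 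Inside an epoch the images $\bar S_i$ have \emph{trivial} stabiliser, so Kneser gives honest growth $\|\bar S_i\|_0 \ge \|\bar S_{i-1}\|_0 + \|\bar A_i\|_0 - 1$: a set with $A_i\subseteq H$ is a no-op (detectable in $O(\|A_i\|_0)$ time once $H$ is known, as $H=d\Z_m$), and every other set strictly enlarges $\bar S$. To avoid paying $\|\bar S\|_0$ per enlargement, process an epoch in \emph{phases}: from the current $\bar S$, greedily absorb the longest contiguous run of remaining sets that keeps $\|\bar S\|_0$ within a factor $2$, cutting off when the size would more than double, the epoch ends, or a new nontrivial stabiliser appears; realise that run by one convolution $\bar S\mapsto \bar S+B$, where the run's sum-product $B$ is computed recursively. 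Because $B$ and $\bar S+B$ are contiguous sub-products of a quotient of $\Z_m$ they have size at most $K$, and because a phase growing by less than a factor $2$ is immediately followed (up to additive slack) by one that does, an epoch uses only $O(\log m)$ such convolutions; summing over the $O(\log m)$ epochs, with the quotient sizes $K/|H|$ themselves shrinking along the subgroup chain, yields $\tOh(1)$ top-level convolutions of total output size $\tOh(K)$, hence cost $\tOh(K)$ (part~1) respectively $K\cdot 2^{O(\sqrt{\log K\cdot\log\log m})}$ (part~2). The per-epoch projection, the no-op tests, and reading the input contribute $\tOh\!\big(\sum_i\|A_i\|_0\big)$.

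The step I expect to be the main obstacle is controlling the \emph{recursive} computation of the phase sum-products $B$. A careless recursion can be driven to depth $\Omega(n)$ and total cost $\Omega(nK)$: for instance when all $A_i$ equal a common subgroup $H$ (each phase would swallow the whole remainder without growing), or when the $A_i$ are slowly growing progressions such as $n$ copies of $\{0,1\}$, where naive left-to-right processing needs $\Theta(K)$ convolutions of size $\Theta(K)$. These are exactly the non-growing/structured regimes that Kneser tames: a phase that fails to enlarge $\bar S$ must expose a strictly larger stabiliser, ending an epoch, which happens $O(\log m)$ times globally; and a phase that does enlarge $\bar S$ makes geometric progress, so once no-ops and epoch transitions are charged away the recursion tree has total size $\tOh(k)$ — provided one folds the ``non-growing'' case into the phase rule and argues that the batch-of-identical-sets pattern is absorbed by repeated doubling inside the recursion. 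Making this amortisation watertight is the delicate part: in particular one must verify that the recursion depth stays $\polylog(mk)$, so that for part~(2) the composition of the per-convolution factors $2^{O(\sqrt{\log(\cdot)\cdot\log\log m})}$ of Theorem~\ref{thm:sumset_deterministic} over all recursion levels still collapses to $2^{O(\sqrt{\log k\cdot\log\log m})}$. Everything else is bookkeeping layered on top of Theorem~\ref{thm:sumset_deterministic}/Theorem~\ref{lemma:nonneg_conv} and Kneser's theorem.
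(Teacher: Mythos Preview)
Your route is genuinely different from the paper's, and the obstacle you flag is real: in your scheme the phase boundaries and the recursive computation of the sub-products $B$ are where all the difficulty hides. To find ``the longest contiguous run that keeps $|\bar S|$ within a factor~$2$'' you must either probe $\bar S + \bar A_{i+1} + \cdots + \bar A_j$ for growing $j$ (which is the very problem you are solving) or predict the cutoff from the Kneser lower bound $\sum_\ell(|\bar A_\ell|-1)$; the latter works but then the recursive call on $B$ is a full instance of the same problem on a sub-interval, with its own epoch structure unrelated to the outer one. Making the total work across this nested recursion come out to $\tOh(k)$ is doable in examples (e.g.\ $n$ copies of $\{0,1\}$ give recursion depth $O(\log n)$ and $O(\log^2 n)$ convolutions) but you have not given an argument, and getting the deterministic factor to stay $2^{O(\sqrt{\log k\cdot\log\log m})}$ rather than picking up a $\polylog m$ multiplier from the recursion depth needs care.

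The paper sidesteps all of this with two devices you do not use. First, it does \emph{not} process the $A_i$ left to right; it builds the sumset along a balanced binary tree, so intermediate nodes are $X_{r,i}=X_{r-1,2i-1}+X_{r-1,2i}$. Whenever any such node violates $|X_{r,i}|\ge |X_{r-1,2i-1}|+|X_{r-1,2i}|-1$, Kneser gives a non-trivial $Sym(X_{r,i})\subseteq Sym(\Sigma)$, and the whole computation is restarted over the strictly smaller quotient $\Z_d$; this is the \emph{only} recursion, and it happens at most $\log m$ times. Second, to control total work when no violation occurs, the paper maintains a geometrically increasing guess $s$ for $|\Sigma|$ and, on each tree level, as soon as $\sum_{j\le i}|X_{r,j}|\ge s+n/2^r$ it sets the remaining $X_{r,j}$ on that level to $\{0\}$ and moves on. The point is that the truncated tree still computes a sumset of a \emph{subset} of $\{A_1,\dots,A_n\}$, so its size is at most $|\Sigma|$ and its symmetry group is still contained in $Sym(\Sigma)$; and the level-wise growth inequality then forces $|X_{\log n,1}|>s$ whenever $|\Sigma|>s$, so the guess can be doubled. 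There is no recursion on sub-products and no need to locate phase boundaries: the per-level budget $s+n/2^r$ plus the single universe-size recursion give the $\tOh(k)$ (resp.\ $k\cdot 2^{O(\sqrt{\log k\cdot\log\log m})}$) bound directly.

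In short, your epoch/phase picture and the paper's tree/guess/truncate picture both rest on Kneser, but the paper replaces your nested recursion on contiguous sub-products by a single recursion on the universe size, which is what makes the amortisation trivial.
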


\begin{remark}
It might seem confusing that for very small $k$, specifically for $k \le \log^{O(1)} m$, our deterministic time is faster than our randomized time. However, as we will discuss later, it is easy to solve the problem deterministically in time $k^{O(1)}$. In fact our time bounds are $\mathrm{min}\left\{k^3,k \cdot \polylog (mk))\right\}$ expected time, and $\mathrm{min}\left\{k^3,k \cdot  2^{O(\sqrt{\log k \cdot \log \log m})} \cdot \polylog(mk)\right\}$ deterministically; the latter can be simplified to the expression in Theorem~\ref{thm:n_fold_conv}.
\end{remark}

In order to employ Additive Combinatorics machinery, it will be convenient to phrase the problem in terms of sets and sumsets, making the connection more clear.
To this end, we replace every vector $A_i \in \{0,1\}^m$ by a set $A'_i \subseteq \mathbb{Z}_m$ such that $x \in A'_i$ if and only the $x$-th entry of $A_i$ is $1$. Then, it can easily be seen that the Boolean convolution $A_1 \convBool A_2 \convBool \ldots \convBool A_n$ is equivalent to computing the sumset $A'_1+A'_2+\ldots + A'_n$. Written in a more Additive-Combinatorics-friendly way, our main result can be rephrased in the following way.

\begin{theorem}[Theorem \ref{thm:n_fold_conv} restated, Section~\ref{sec:n_fold_general}]\label{thm:n_fold_comp}
Given sets $A_1,\ldots A_n \subseteq \mathbb{Z}_m$, we can compute their sumset $A_1+A_2 + \ldots + A_n$
\begin{enumerate}
\item[(1)] by a randomized Las Vegas algorithm in $O(k \cdot \polylog (mk))$ expected time, or
\item[(2)] by a deterministic algorithm in $k \cdot  2^{O(\sqrt{\log k \cdot \log \log m})} $ time.
\end{enumerate}
 Here, $k := |A_1|+\ldots+|A_n| + |A_1+\ldots+A_n|$ is the total input plus output size.
\end{theorem}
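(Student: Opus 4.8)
The plan is to reduce the $n$-fold sumset computation to a bottom-up binary tree of $2$-fold sumset computations, using Theorem~\ref{thm:sumset_deterministic} (or its Las Vegas analogue Theorem~\ref{lemma:nonneg_conv}) at each internal node, but with a twist that controls the blow-up of intermediate results. The naive binary tree fails because $\wt\Omega(n)$ intermediate sumsets can have size $\Omega(m)$, as the worst-case example in the introduction shows. The key observation to exploit is Kneser's theorem: if a sumset $B+C$ is much smaller than $|B|+|C|$, then $B+C$ must be (a union of cosets of) a proper subgroup $H$ of $\Z_m$, i.e. the sumset is ``already saturated'' along $H$. So I would maintain, at each node of the recursion, not the full sumset but a compressed representation: the subgroup $H$ along which the partial sumset is a union of cosets, together with the (small) set of cosets it occupies. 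Whenever a $2$-fold sumset step would produce something of size comparable to $m$, Kneser forces it to live in few cosets of a large subgroup, and we recurse inside $\Z_m/H$ which is a strictly smaller modulus.

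Concretely, here is the order of operations. First, set up the binary tree over $A_1,\dots,A_n$ of depth $O(\log n)$ and process it bottom-up. At a node combining partial sumsets $S_L, S_R$, compute $S_L + S_R$ using the $2$-fold algorithm, but abort the computation once it has produced more than, say, $C(|S_L| + |S_R|)$ output elements for a suitable constant (the output-sensitive algorithm lets us do this). If we do not abort, the intermediate size stayed proportional to the inputs, and we charge the work to the input size of the children; summing over the tree gives $k \cdot \polylog m$ (resp. $k \cdot m^{o(1)}$) total. If we do abort, then $|S_L+S_R|$ is large compared to $|S_L|+|S_R|$, so by Kneser's theorem $S_L + S_R$ is a union of cosets of a nontrivial subgroup $H \le \Z_m$ with $|H| \ge$ (roughly) the excess; crucially $H$ can be found efficiently (it is the stabilizer $\{h : h + (S_L+S_R) = S_L+S_R\}$, which the structure of the aborted computation reveals, or can be recomputed from a small witness set). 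Once we know that the eventual answer along this branch is $H$-periodic, we push the whole remaining problem down to $\Z_m / H \cong \Z_{m'}$ with $m' = m/|H|$ — that is, replace every remaining input set by its image modulo $H$ — and recurse on the smaller instance; at the end, lift back by taking the full preimage, which is exactly unioning cosets of $H$. The potential $m$ strictly decreases each time we abort, and each abort event reduces $m$ by a factor of at least $2$, so there are only $O(\log m)$ such ``restarts'' along any root-to-leaf path; a careful accounting shows the total cost remains $k$ times the overhead of the $2$-fold subroutine.

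The main obstacle, and where the real care is needed, is the bookkeeping that ties the abort/Kneser mechanism together with the charging argument so that the final bound is genuinely $k \cdot \polylog m$ and not $k \cdot n^{o(1)}$ or $k \cdot \polylog(m) \cdot \log n$. Two subtleties dominate. First, one must ensure that the ``compressed'' partial sumsets (unions of cosets of the current working subgroup) can be combined and have their sizes estimated without ever touching $\Theta(m)$ elements; this is where working consistently in the quotient group $\Z_{m'}$ rather than in $\Z_m$ is essential, so that a sumset that is "all of $\Z_m$" along a branch is represented by "all of $\Z_{m'}$" which has size $m' \le k$. Second, the Kneser step gives $|S_L + S_R| \ge |S_L| + |S_R| - |H|$, so the subgroup we extract has size at least the ``deficiency'' $|S_L|+|S_R| - |S_L+S_R|$; we need the abort threshold tuned so that whenever we abort, this deficiency is $\Omega(m)$ (equivalently $|H| = \Omega(m)$, giving a constant-factor drop in modulus), which requires that along the branch the partial sumset had already grown to within a constant factor of $m$ — consistent with the fact that the output we eventually produce on that branch has size $\Theta(m') \le k$, so the work is paid for. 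Handling the interaction of multiple simultaneous aborts in different subtrees (each producing its own subgroup $H_j$, which must be reconciled when the subtrees merge, via $H = \langle H_1, H_2 \rangle$ or the appropriate join) is the last piece that needs to be done carefully. Everything else — the $2$-fold subroutine, the FFT-based coset arithmetic, the $k^{O(1)}$ fallback for tiny $k$ — is either cited from Theorem~\ref{thm:sumset_deterministic}/Theorem~\ref{lemma:nonneg_conv} or routine.
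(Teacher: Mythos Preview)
Your high-level plan---binary tree, invoke Kneser, reduce the modulus---matches the paper, but you have applied Kneser in the wrong direction, and this breaks both halves of the argument. Kneser (Corollary~\ref{cor:kneser}) yields a nontrivial symmetry group only when $|S_L+S_R|<|S_L|+|S_R|-1$, i.e.\ when the sumset is \emph{small}. Under your abort condition $|S_L+S_R|>C(|S_L|+|S_R|)$ the deficiency $|S_L|+|S_R|-|S_L+S_R|$ is negative and Kneser says nothing: take $m$ prime and $S_L=S_R$ a generic set of size $\sqrt m$, so that $|S_L+S_R|=\Theta(m)$ while $Sym(S_L+S_R)=\{0\}$. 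Your own third paragraph states $|H|\ge|S_L|+|S_R|-|S_L+S_R|$ correctly, but that lower bound is negative precisely in your abort case. The no-abort charging is also wrong: if every node satisfies $|S_L+S_R|\le C(|S_L|+|S_R|)$ with any constant $C>1$, the level sizes may grow by a factor $C$ per level, giving total work $C^{\log n}\cdot\sum_i|A_i|$ rather than $k\cdot\polylog m$.

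The paper reverses both parts. It triggers the modulus reduction when $|X_{r,i}|<|X_{r-1,2i-1}|+|X_{r-1,2i}|-1$ (the sumset \emph{failed} to grow), which is exactly when Corollary~\ref{cor:kneser} produces a nontrivial $Sym(X_{r,i})$; it then restarts the \emph{entire} computation over $\Z_d$ with $d=m/|Sym(X_{r,i})|$, so there is only ever one modulus in play and your worry about reconciling subgroups from different subtrees does not arise. When the trigger never fires, the inequalities $|X_{r,i}|\ge|X_{r-1,2i-1}|+|X_{r-1,2i}|-1$ telescope upward: a large level-sum forces a large root $|X_{\log n,1}|$. This is combined with a doubling guess $s$ for the output size; once the running level-sum exceeds $s+n/2^r$ the remaining sets on that level are replaced by $\{0\}$, and the telescoping guarantees the ``too big'' signal reaches the root, where $|X_{\log n,1}|>s$ certifies the guess was too small. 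When $s$ first reaches the true output size the cutoff never fires and the tree computes $A_1+\ldots+A_n$ exactly. This guess-and-zero mechanism, with Kneser oriented correctly, is the missing ingredient.
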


We remark that further improvements over Theorem~\ref{thm:sumset_deterministic} would directly improve Theorems~\ref{thm:n_fold_conv} and~\ref{thm:n_fold_comp}. In particular, our factor $m^{o(1)} = 2^{O(\sqrt{\log m \log \log m})}$ stems entirely from the application of Theorem~\ref{thm:sumset_deterministic}, and thus indirectly from a tool called the FFT Lemma~\cite{CL15} that we use to prove Theorem~\ref{thm:sumset_deterministic}.

We also remark that Theorem~\ref{thm:n_fold_comp} is formulated for sumsets over $\Z_m$, but by setting $m$ sufficiently large (like $1+\sum_i \max(A_i)$) we can also compute sumsets $A_1+\ldots+A_n \subseteq \Z$ over the integers in time close to the input plus output size. However, this is a much simpler result that can also be achieved by elementary means, without any Additive Combinatorics.

\section{Preliminaries and Technical Toolkit}

For any positive integer $m$, we let $\Z_m$ be the group of residues modulo~$m$.
For two sets $A,B \subseteq \Z_m$, we define $A+B := \{ x \mid \exists a\in A, b \in B \colon a+b = x\}$. 
Unless explicitly stated otherwise, all sumsets throughout the paper are computed in the underlying group~$\Z_m$, i.e., $A+B \subseteq \Z_m$.
We also write $A \bmod q := \{ a \bmod q \mid a \in A\}$. 

Throughout the paper we use the notation of sumset computation instead of the equivalent Boolean convolution.




\subsection{Randomized Sumset Computation}

Cole and Hariharan's sparse convolution algorithm~\cite{CH02} implies that the sumset $A+B$ can be computed in Las Vegas time $O(|A+B| \cdot \log^2 m + \poly(\log m))$. Very recently, this was improved to $O(|A+B| \cdot \log |A+B| + \poly(\log m))$~\cite{BFN21} with a Monte Carlo algorithm.

\begin{theorem}[Randomized Sumset Computation, \cite{CH02}, see also Section~\ref{sec:sumset_comp}]\label{thm:sumset_randomized}
Given sets $A,B \subseteq \Z_m$, their sumset $A+B$ can be computed in expected time $O( |A+B| \poly(\log m) )$.
\end{theorem}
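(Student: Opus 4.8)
The plan is to reduce randomized sumset computation to the sparse Boolean convolution algorithm of Cole and Hariharan~\cite{CH02}. Recall that a set $A \subseteq \Z_m$ corresponds to its indicator vector $\mathbbm{1}_A \in \{0,1\}^m$, and similarly for $B$. The sumset $A+B$ is precisely the support of the Boolean convolution $\mathbbm{1}_A \convBool \mathbbm{1}_B$, which in turn equals the support of the arithmetic convolution (with wraparound) $\mathbbm{1}_A \conv \mathbbm{1}_B$ over $\Z$, since both $\mathbbm{1}_A$ and $\mathbbm{1}_B$ are non-negative. Hence it suffices to invoke a sparse non-negative convolution algorithm on $\mathbbm{1}_A$ and $\mathbbm{1}_B$ and read off the support of the result.

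The key technical input is that sparse convolution algorithms such as~\cite{CH02} run in time near-linear in the \emph{output sparsity}, i.e., the number of non-zero entries of $\mathbbm{1}_A \conv \mathbbm{1}_B$, plus a $\poly(\log m)$ overhead. First I would recall (or cite from Section~\ref{sec:sumset_comp}, which the excerpt forwards to) that the Cole--Hariharan approach hashes the non-zero positions of the two input vectors into $O(t)$ buckets, where $t$ is a guess for the output sparsity, recovers the convolution restricted to each bucket via a constant-size FFT, and verifies correctness; doubling $t$ until verification succeeds gives a Las Vegas algorithm. Since $\|\mathbbm{1}_A \conv \mathbbm{1}_B\|_0 = |A+B|$, the total expected running time is $O(|A+B| \cdot \poly(\log m))$. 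I would then simply state that discarding all entries of the computed convolution that are zero yields exactly the set $A+B \subseteq \Z_m$, completing the proof.

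The main obstacle, such as it is, is purely expository rather than mathematical: one must be careful that the relevant sparse convolution result is stated in a form whose running time depends on the output sparsity and not on the ambient dimension $m$ beyond logarithmic factors, and that it handles the wraparound (cyclic) convolution rather than only the acyclic one. Both are standard — cyclic convolution over $\Z_m$ is handled directly by hashing positions modulo $m$, and the acyclic case reduces to it as noted in Footnote~\ref{foot:one} — so the ``proof'' is essentially a pointer to~\cite{CH02} together with the observation that support of a non-negative convolution equals the sumset. Since the paper anyway reproves a self-contained version in Section~\ref{sec:sumset_comp} (indeed Theorem~\ref{lemma:nonneg_conv} gives a Las Vegas $\|A \conv B\|_0 \cdot \polylog(m)$ bound), I would keep the proof here to one or two sentences and defer details there.
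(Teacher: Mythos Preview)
Your proposal is correct and matches the paper's treatment: Theorem~\ref{thm:sumset_randomized} is stated as a citation to~\cite{CH02}, and your reduction (sumset $=$ support of the non-negative convolution of indicator vectors, then invoke an output-sensitive sparse convolution routine) is exactly the intended reading of that citation.

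One small clarification worth making: the self-contained reproof the paper gives in Section~\ref{sec:sumset_comp} does \emph{not} follow the Cole--Hariharan hashing/bucket-doubling scheme you sketch. Instead, Lemma~\ref{lemma:sumset_both} recursively computes $A'+B'$ over $\Z_{m/2}$, lifts it to a superset $T$ of $A+B$ of size at most $4|A+B|$, and then invokes the Chan--Lewenstein FFT Lemma (Lemma~\ref{lemma:chan_lewenstein}) with this $T$. Both routes yield the same $O(|A+B|\cdot\polylog m)$ Las Vegas bound, so your deferral ``keep the proof here to one or two sentences and defer details there'' is fine---just be aware that the details you would find there are a different algorithm than the one you outlined.
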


\subsection{The Symmetry Group and its Properties} \label{sec:toolkit_sym}

\begin{definition}[Symmetry group of a set]
Let $A \subseteq \Z_m$. We define the symmetry group of $A$ as $Sym(A) = \{ h \in \Z_m \mid A+\{h\} = A\}$. 
\end{definition}

It is easy to check that $Sym(A)$ satisfies the group properties with respect to addition, and thus $Sym(A)$ is a subgroup of $\mathbb{Z}_m$. In particular, we have $Sym(A) = d \cdot \mathbb{Z}_{m/d}$, where $d$ is the minimum non-zero element of $Sym(A)$ (to see this, note that the minimum non-zero element of a cyclic subgroup is also a generator of it).

One can check that $Sym(A) \subseteq Sym(A+B)$ holds for any sets $A,B \subseteq \Z_m$. This property will be of great importance to us. Moreover, for any non-empty set $A$ and any $x \in A$ we have $Sym(A) \subseteq A + \{-x\}$. This holds since any $h \in Sym(A)$ maps $x$ to some $x' \in A$, which means $x' = x + h \pmod m$, hence $h = x' - x \pmod m$. In particular, the symmetry group of a non-empty set $A$ has size at most $|A|$. 

We show that the symmetry group can be computed in linear time.

\begin{theorem}[Computing the Symmetry Group, Section~\ref{sec:symmetry_group}]\label{thm:symmetry_group}
Given a sorted non-empty set $A \subseteq \Z_m$, we can compute $Sym(A)$ in time $O(|A|)$.
\end{theorem}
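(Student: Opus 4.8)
The plan is to reduce the computation of $Sym(A)$ to a single string-matching / periodicity computation. Recall from the discussion preceding the statement that $Sym(A)$ is a cyclic subgroup of $\Z_m$, so it equals $d \cdot \Z_{m/d}$ where $d$ is the smallest positive element of $Sym(A)$ (and $d \mid m$). Thus it suffices to find this value $d$, i.e., the smallest positive shift $h$ with $A + \{h\} = A$ (with a sentinel value $d = m$, corresponding to the trivial group $\{0\}$, if no such $h$ exists). Equivalently, writing $\chi \in \{0,1\}^m$ for the indicator vector of $A$, we want the smallest positive $h \mid m$ such that the cyclic rotation of $\chi$ by $h$ equals $\chi$ — that is, the smallest cyclic period of $\chi$ that divides $m$. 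Note any cyclic period of $\chi$ that is $\le m/2$ automatically divides $m$ by the Fine–Wilf theorem (periodicity lemma), so in fact we just need the smallest cyclic period of the length-$m$ string $\chi$, capped at $m$.

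First I would compute the smallest cyclic period of $\chi$ using a standard linear-time technique: form the string $\chi\chi$ (length $2m$) and run the failure-function / Z-algorithm (Knuth–Morris–Pratt) to find the smallest $p \ge 1$ such that $\chi\chi[1..m]$ occurs at position $p+1$, i.e., $\chi\chi[p+1 .. p+m] = \chi$; this $p$ is exactly the least cyclic period $d$, and it is found in $O(m)$ time. However, $m$ may be exponentially larger than $|A|$, so I cannot afford to materialize $\chi$ or run an $O(m)$ algorithm. The fix is to work only with the sorted list of elements of $A$ together with the "gap" encoding: represent $\chi$ as the cyclic sequence of consecutive differences $g_1, g_2, \ldots, g_{|A|}$ (the circular gaps between successive elements of $A$, summing to $m$). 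A cyclic shift $h$ fixes $A$ if and only if $h = g_1 + \cdots + g_j \bmod m$ for some $j$ and the gap sequence $(g_1,\ldots,g_{|A|})$ is invariant under rotation by $j$ positions. So the task becomes: find the smallest rotation amount $j \in \{1,\ldots,|A|\}$ under which the integer sequence $(g_1,\ldots,g_{|A|})$ is cyclically invariant, then set $d := g_1 + \cdots + g_j$. This is again a least-cyclic-period computation, now on a string of length $|A|$ over the alphabet of integers (compared in $O(1)$ each), solvable in $O(|A|)$ time by KMP on the doubled gap sequence.

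Finally, having obtained $d$, output $Sym(A) = \{0, d, 2d, \ldots, m-d\}$; since $|Sym(A)| \le |A|$ (shown in the excerpt), this list has at most $|A|$ elements and is produced in $O(|A|)$ time. The main thing to get right is the reduction from "shift of the set" to "rotation of the gap sequence" — in particular, verifying that $d = \sum_{i \le j} g_i$ is indeed a divisor of $m$ whenever $j$ is a genuine cyclic period of the gap sequence (this follows because $|A|/j$ copies of the length-$j$ block tile the whole sequence, so $j \cdot$(block sum)$\,= m$), and handling the degenerate cases $|A| \le 1$ and $A = \Z_m$ separately. Everything else is bookkeeping: sorting is given, computing circular gaps is $O(|A|)$, and KMP is textbook linear time.
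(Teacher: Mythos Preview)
Your proposal is correct and is essentially the paper's own proof: both encode $A$ by its cyclic gap sequence $(a_{i+1}-a_i)\bmod m$, observe that $h\in Sym(A)$ iff the gap sequence is invariant under the corresponding rotation, and detect such rotations by running KMP on the gap sequence against its doubling. The only cosmetic difference is that the paper reads off \emph{every} element $a_i-a_1\in Sym(A)$ from each match position~$i$, whereas you find only the smallest period $j$ and then list the multiples of $d=g_1+\cdots+g_j$; both are $O(|A|)$.
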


\subsection{Kneser's Theorem}

The following theorem lies at the core of our algorithms.

\begin{theorem}[Kneser's Theorem, see, e.g., Theorem 5.5 in~\cite{tao2006additive}] \label{thm:kneser} Let $A,B \subseteq \mathbb{Z}_m$ be non-empty. Then
\[	|A+B| \geq \min \{ |A|+|B| - |Sym(A+B)|, m \}.	\]
\end{theorem}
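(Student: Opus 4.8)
The plan is to recover the classical theorem of Kneser via the Dyson $e$-transform together with an induction. First I would dispose of the degenerate case: if $A+B=\Z_m$ then the right-hand side equals $m=|A+B|$ and there is nothing to prove, so I may assume $A+B\subsetneq\Z_m$ and set $H:=Sym(A+B)$. It then suffices to prove the sharper inequality $|A+B|\ge |A+H|+|B+H|-|H|$, since $|A+H|\ge|A|$ and $|B+H|\ge|B|$. This sharper form is also the one that plays well with induction: replacing $A,B$ by $A+H,B+H$ does not change the sumset (as $A+B+H=A+B$) but turns both sets into unions of $H$-cosets, which is the structural situation the argument ultimately wants to be in.

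The induction is on $|B|$ (some presentations instead induct on $|A|+|B|$). The base case $|B|=1$ is immediate: $A+B$ is a translate of $A$, so $H=Sym(A)$, $A+H=A$, $B+H$ is a translate of $H$, and the claimed inequality holds with equality. For the inductive step the key device is the $e$-transform (Dyson transform): for $e\in\Z_m$ put
\[
 A_e := A\cup(B+e), \qquad B_e := B\cap(A-e).
\]
A short inclusion–exclusion computation gives $|A_e|+|B_e|=|A|+|B|$, and one checks $A_e+B_e\subseteq A+B$ and that $B_e\neq\emptyset$ whenever $(B+e)\cap A\neq\emptyset$ (e.g.\ for $e=a-b$ with $a\in A$, $b\in B$). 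I would then split into two cases. If some admissible $e$ makes $|B_e|<|B|$, apply the induction hypothesis to $(A_e,B_e)$ and push the resulting cardinality and stabilizer information back to $(A,B)$, using $A_e+B_e\subseteq A+B$ and $|A_e|+|B_e|=|A|+|B|$. If no admissible $e$ shrinks $B$ — the pair is ``$e$-reduced'' — this rigidity forces a strong structural constraint on $B$ (essentially that, after normalizing $0\in B$, the set $B$ is a union of cosets of a subgroup stabilizing $A+B$), and in that situation a direct count of cosets in the quotient $\Z_m/H$ yields the bound.

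The part that requires genuine care, rather than routine manipulation, is precisely this dichotomy in the inductive step: analyzing what the failure of every $e$-transform to shrink $B$ actually implies about $B$, and, in the complementary case, checking that $Sym(A_e+B_e)$ is compatible with $H=Sym(A+B)$ so that the induction hypothesis can legitimately be fed back. All the bookkeeping of cosets and stabilizers across the transform is where the real content sits; the reduction to $A+H,B+H$, the base case, and the algebraic identities for the $e$-transform are all straightforward. Since Kneser's theorem is long established, in the actual write-up I would simply invoke it in the form of Theorem~5.5 of~\cite{tao2006additive} rather than reproduce the full case analysis.
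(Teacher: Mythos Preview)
The paper does not prove this statement at all: Kneser's theorem is simply quoted as a known result with a citation to Theorem~5.5 of~\cite{tao2006additive}, and your own final sentence---that in the actual write-up you would just invoke that reference---matches the paper's treatment exactly. Your sketched argument via the Dyson $e$-transform and induction on $|B|$ is a standard and correct route to Kneser's theorem, and you correctly flag the only genuinely delicate point (controlling $Sym(A_e+B_e)$ versus $Sym(A+B)$ in the inductive step), but none of this appears in the paper since the result is treated as black-box input.
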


We will use the following simple corollary.
\begin{corollary}\label{cor:kneser}
  Let $A,B \subseteq \Z_m$ be non-empty. If $|A+B| < |A|+|B|-1$ then $|Sym(A+B)|>1$.
\end{corollary}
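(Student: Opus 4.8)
The plan is to derive Corollary~\ref{cor:kneser} directly from Kneser's Theorem (Theorem~\ref{thm:kneser}) by simple contraposition. Assume the hypothesis $|A+B| < |A|+|B|-1$. In particular $|A+B| < m$, since otherwise $|A+B| = m \ge |A|+|B|-1$ would contradict the hypothesis (note $|A|,|B| \le m$, so $|A|+|B|-1 \le 2m-1$; more carefully, if $|A+B|=m$ then we'd need $m < |A|+|B|-1$, which is possible, so let me instead argue via the min directly). The clean way: Kneser's Theorem gives $|A+B| \ge \min\{|A|+|B|-|Sym(A+B)|,\, m\}$. Since $|A+B| < |A|+|B|-1 \le |A|+|B|$ and trivially $|A+B| \le m$, the minimum on the right-hand side cannot be attained by the term $m$ unless $m \le |A|+|B|-|Sym(A+B)|$; in either case we get $|A+B| \ge |A|+|B|-|Sym(A+B)|$ whenever this quantity is $\le m$, and when it exceeds $m$ we instead get $|A+B|\ge m \ge |A|+|B|-|Sym(A+B)|$ is false — so let me just take the first branch.

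Concretely: from Theorem~\ref{thm:kneser}, $|A+B| \ge \min\{|A|+|B|-|Sym(A+B)|, m\} \ge |A|+|B|-|Sym(A+B)| - \max\{0, (|A|+|B|-|Sym(A+B)|) - m\}$, which is getting complicated. The right approach is: since $|A+B|\le m$ always, and $|A+B| < |A|+|B|-1$, we have $|A+B| < |A|+|B|-1$, so $|A+B|$ is strictly less than $|A|+|B|$. If the $\min$ in Kneser equals $m$, then $|A+B| \ge m \ge |A+B|$, forcing $|A+B|=m$, but then $m = |A+B| < |A|+|B|-1$, so $|A|+|B|-|Sym(A+B)| > m+1-|Sym(A+B)|$... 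I will instead present it as: in both cases of the minimum, $|A+B| \ge |A|+|B|-|Sym(A+B)|$ holds, because if $m$ is the smaller term then $|A|+|B|-|Sym(A+B)| \ge m = \min \le |A+B|$ — wait, that's backwards. Let me just say it plainly in the proof.

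\begin{proof}[Proof of Corollary~\ref{cor:kneser}]
Suppose $|A+B| < |A|+|B|-1$. We claim this implies $|A+B| \ge |A|+|B| - |Sym(A+B)|$. Indeed, by Theorem~\ref{thm:kneser} we have $|A+B| \ge \min\{|A|+|B|-|Sym(A+B)|,\, m\}$. If the minimum equals $|A|+|B|-|Sym(A+B)|$, the claim is immediate. Otherwise the minimum equals $m$, so $m \le |A|+|B|-|Sym(A+B)|$ and $|A+B| \ge m$; but $|A+B| \le m$ always holds, hence $|A+B| = m \le |A|+|B|-|Sym(A+B)|$, and the claim follows again. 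Combining the claim with the hypothesis $|A+B| < |A|+|B|-1$ gives $|A|+|B|-|Sym(A+B)| \le |A+B| < |A|+|B|-1$, hence $|Sym(A+B)| > 1$.
\end{proof}

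The only subtlety worth flagging is the handling of the $\min$ in Kneser's bound — one must make sure the conclusion survives in the degenerate case $|A+B| = m$, which the case analysis above dispatches using the trivial inequality $|A+B| \le m$. No Additive Combinatorics beyond the quoted theorem is needed, and the property $Sym(A) = d\cdot\Z_{m/d}$ mentioned earlier is not required here. I expect no real obstacle; this is a one-line deduction once the $\min$ is unpacked.
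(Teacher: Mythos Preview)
Your formal proof contains a directional error in the second case. You set out to prove the claim $|A+B| \ge |A|+|B|-|Sym(A+B)|$, and in the case where the minimum in Kneser's bound equals $m$ you correctly derive $|A+B| = m \le |A|+|B|-|Sym(A+B)|$ --- but that is the \emph{opposite} inequality to the one you need, so ``the claim follows again'' is not justified. (You even flagged this reversal in your planning text and then reproduced it in the proof.) The conclusion of the corollary is still true in this case, but not by the chain $|A|+|B|-|Sym(A+B)| \le |A+B| < |A|+|B|-1$ that you invoke afterwards.

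The paper sidesteps the issue by treating the case $|A+B|=m$ separately and directly: then $A+B=\Z_m$, so $Sym(A+B)=\Z_m$ and $|Sym(A+B)|=m>1$ (after disposing of $m=1$, which is vacuous). Only in the remaining case $|A+B|<m$ does Kneser reduce to the clean inequality $|A+B|\ge|A|+|B|-|Sym(A+B)|$, from which your final line is valid. Either adopt that split, or patch your Case~2 by noting that $A+B=\Z_m$ forces $|Sym(A+B)|=m$, whence $|A|+|B|-|Sym(A+B)|\le 2m-m=m=|A+B|$ and your claimed inequality does hold.
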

\begin{proof}
  For $m=1$ it cannot happen that $|A+B| < |A|+|B|-1$, so assume $m \ge 2$. 
  
  If $|A+B| = m$, then $A+B = \Z_m$. This implies $Sym(A+B)=\Z_m$ and thus $|Sym(A+B)| = m > 1$.
  Otherwise, if $|A+B| < m$, we can simplify the bound obtained from Kneser's Theorem to
  \[ |A+B| \ge |A|+|B| - |Sym(A+B)|. \]
  Together with $|A+B| < |A|+|B|-1$, this implies $|Sym(A+B)| > 1$.
\end{proof}

%

\section{Overview and Comparison with Previous Approaches}

We start by giving a rough overview of our algorithm, leaving out several details.
Our improvements are obtained by delving deeper into the additive structure of sumset computation over $\mathbb{Z}_m$ than previous work. Our algorithms compute the sumset $A_1+\ldots+A_n$ in a bottom-up 
%
tree-like fashion as $((A_1+A_2)+(A_3+A_4)) + \ldots$. For any two sets $X,Y$ for which we compute $X+Y$ during the execution of this algorithm, we check whether $|X+Y| < |X|+|Y|-1$. If this is the case, Kneser's Theorem (specifically Corollary~\ref{cor:kneser}) implies that $X+Y$ has a non-trivial symmetry group, and hence $A_1+\ldots+A_n$ has a non-trivial symmetry group. A non-trivial symmetry group of a set $Z = X+Y \subseteq \mathbb{Z}_m$ implies that the set is \emph{periodic}: there exists a divisor $d$ of $m$ and a set $Z' \subseteq \{0,\ldots, d-1\}$ such that $Z = Z' + d \cdot \mathbb{Z}_{m/d} = Z'+\{0,d,2d,\ldots,m-d\}$, i.e., $Z$ is a rotation (by multiples of $d$) of a subset of $\{0,\ldots,d-1\}$. 
This allows us to reduce to the smaller universe $\Z_d$, which is progress (it might seem from this discussion that we require a factorization of $m$, but this is not the case: if we reduce to a smaller universe $\Z_d$, then $d$ is a divisor of $m$ that can be easily read off the sumset $Z = X+Y$, by computing the symmetry group $Sym(X+Y)$ and taking its smallest non-zero element). 
It remains to argue about the situation in which every computed sumset satisfies $|X+Y| \ge |X|+|Y|-1$. Using this inequality, we can control at any intermediate step of the algorithm the total size of all sumsets computed so far. When the computation arrives at the root, the running time that we spent on computing these sumsets is almost linear in the input plus output size.


\subparagraph{Why Previous Approaches Cannot Solve the Generalized Problem.}

A natural question to ask is whether previous algorithms for Subset Sum or Modular Subset Sum were also able to tackle the more general problem of $n$-fold sumset computation. The techniques underlying the algorithms for Subset Sum in \cite{Bring17,KoiliarisX17, JH19} are inherently non-modular, and hence cannot facilitate $n$-fold sumset computation problem over the group $\mathbb{Z}_m$. More relevant is the Modular Subset Sum problem, which is a standard variant of Subset Sum, where one works over $\mathbb{Z}_m$ rather than $\mathbb{Z}$. This problem has seen two interesting developments in the last few years.

The deterministic algorithm of Koiliaris and Xu~\cite{KoiliarisX17} uses multiple interesting problem-specific tricks for Modular Subset Sum, but it is unclear how to generalize them to $n$-fold sumset computation. In fact, their algorithm can be viewed as a reduction from Modular Subset Sum to $\mathrm{min}\{\sqrt{n}, m^{1/4}\}$-fold sumset computation, which they then solve by the straightforward repeated Fast Fourier Transform. Hence, also for the general case of $n$-fold sumset computation their approach does not seem to yield time $o(nm)$. 


All known algorithms for Modular Subset Sum~\cite{ABJTW19,axiotis2021fast,cardinal2021modular} compute the set of attainable subset sums $\mathcal{S}(A) = \{0,a_1\}+\ldots+\{0,a_n\} \subseteq \Z_m$ for $A = \{a_1,\ldots,a_n\}$.
The main idea is to compute $\mathcal{S}(A)$ from $\mathcal{S}(A \setminus \{a\})$ by forming the vector $1_{a + \mathcal{S}(A\setminus \{a\})} - 1_{\mathcal{S} (A\setminus \{a\}) }$. 
It can be easily seen that this vector consists of an equal number of positive and negative entries, and the positive entries correspond to the ``new'' sums $\mathcal{S}(A) \setminus \mathcal{S}(A \setminus \{a\})$.
Using hashing-based arguments or appropriate data structures for string manipulation, they show how to recover the support of the aforementioned vector in near-linear output-sensitive time. 
A possibility to generalize this approach to $n$-fold sumset computation $A_1+\ldots+A_n$ would be to consider the vector $\sum_{a \in A_n} (1_{a + A_1+\ldots+A_{n-1}} - 1_{A_1+\ldots+A_{n-1}})$. However, measuring this vector would incur time $\Omega(|A_n|)$, and thus an immediate generalization of their approach would at least pay a factor $\max_i |A_i|$ on top of the output size.

\subparagraph{Symmetry Manifestations in Higher Dimensions.} It would be interesting to understand whether the symmetry considerations of our algorithm manifest themselves in other abelian groups, most notably in $\mathbb{Z}_m^D$. The characterization of subgroups over $\mathbb{Z}_m^D$ with $D>1$ is less convenient for our purposes than the characterization in the one-dimensional case, so it seems that a different treatment and notion of progress is needed in that case. We leave this to potential future work. Even if one does worry about the $n$-fold case and concentrates in the simplest case of $n=2$, i.e. $2$-fold $d$-dimensional sparse convolution, the best algorithm we are aware of solves the problem with a multiplicative $2^d$ multiplicative factor on top of output size. We leave as an open question the problem of avoiding the exponential dependence of $2$-fold $d$-dimensional sparse convolution.

\section{Warmup: \boldmath$n$-Fold Sumset Computation over Prime Universe}
\label{app:warmup}

As a warmup, we consider universe $\Z_m = \Z_p$ for prime $p$. For simplicity, we analyze our algorithm only in terms of the input size and the universe size $p$, that is, we defer the output-sensitive analysis to the general algorithm in Section~\ref{sec:n_fold_general}.

Suppose we are given sets $A_1,\ldots,A_n \subseteq \Z_p$. We may assume that $n$ is a power of $2$, since otherwise we can add an appropriate number of sets $A_i = \{0\}$ without affecting the sumset. 
Consider Algorithm~\ref{alg:prime_sum}. We compute $A_1+\ldots+A_n$ in a tree-like bottom-up fashion, by first computing $A_1+A_2, A_3+A_4,\ldots$, then computing $A_1+A_2+A_3+A_4, \ldots$, and so on. The intermediate sets in round $r$ are called $X_{r,1},\ldots,X_{r,n/2^r}$. The termination criterion is that the sets that we computed so far in the current round $r$ have total size significantly more than $p$, more precisely, $\sum_{j\le i} |X_{r,j}| > p + i - 1$. If this criterion is satisfied, then we return the complete universe $\Z_p$.
If the termination criterion is never satisfied, then in the end we return $X_{\log n, 1}$.

\begin{algorithm}[!t]\caption{}\label{alg:prime_sum}
\begin{algorithmic}[1]
\Procedure{\textsc{nFoldSumsetInPrimeUniverse}}{$A_1,A_2,\ldots,A_n,p$} 
\\ \Comment{$n$ is a power of $2$; $p$ is prime; non-empty sets $A_1,\ldots,A_n \subseteq \Z_p$}
\State $X_{0,i} \leftarrow A_i$, for all $i \in [n]$
\For { $r=1$ to $ \log n$}
	\For { $i=1$ to $ n/2^r $}
		\State $X_{r,i} \leftarrow X_{r-1,2i-1} + X_{r-1,2i}$ \Comment{sumset computation via Theorem \ref{thm:sumset_deterministic}}
	\If { $\sum_{j \leq i } |X_{r,j} | > p + i -1$ }
		\State \Return $\Z_p$
	\EndIf
	\EndFor
\EndFor
\State \Return $X_{\log n, 1}$
\EndProcedure
\end{algorithmic}
\end{algorithm}

It remains to analyze correctness and running time of this algorithm. To analyze correctness of the termination criterion, we need the following lemma.

\begin{lemma}\label{lem:cauchy}
Let $p$ be a prime, and let $A_1,A_2,\ldots, A_n \subseteq \mathbb{Z}_p$ be non-empty. If $ \sum_{j=1}^n |A_j| \geq p + n-1$, then $A_1 + A_2+ \ldots + A_n =\mathbb{Z}_p$.
\end{lemma}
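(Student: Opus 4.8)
The plan is to prove this by induction on $n$, using the classical Cauchy--Davenport inequality as the base case. Recall that Cauchy--Davenport states that for a prime $p$ and non-empty $A, B \subseteq \mathbb{Z}_p$ we have $|A+B| \geq \min\{|A|+|B|-1, p\}$; this is exactly the $m = p$ prime case of Kneser's Theorem (Theorem~\ref{thm:kneser}), since the only subgroups of $\mathbb{Z}_p$ are $\{0\}$ and $\mathbb{Z}_p$, and a non-trivial symmetry group would force $A+B = \mathbb{Z}_p$ and hence the bound $\min\{\dots, p\} = p$ anyway. So I may take it as given.

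The induction would go as follows. For $n = 1$, the hypothesis $|A_1| \geq p$ forces $A_1 = \mathbb{Z}_p$, which is the conclusion. For the inductive step, let $S = A_1 + \dots + A_{n-1}$. I would like to apply the inductive hypothesis to $A_1, \dots, A_{n-1}$, but the hypothesis $\sum_{j=1}^{n-1} |A_j| \geq p + (n-1) - 1$ need not hold. So I split into two cases. \textbf{Case 1:} $\sum_{j=1}^{n-1} |A_j| \geq p + n - 2$. Then by induction $S = \mathbb{Z}_p$, and since $A_n$ is non-empty, $S + A_n = \mathbb{Z}_p + A_n = \mathbb{Z}_p$, done. \textbf{Case 2:} $\sum_{j=1}^{n-1} |A_j| \leq p + n - 3$. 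In this case I cannot conclude $S = \mathbb{Z}_p$, but I can get a lower bound on $|S|$. Applying Cauchy--Davenport repeatedly (a trivial sub-induction), $|S| \geq \min\{\sum_{j=1}^{n-1}|A_j| - (n-2), p\}$. Now use the overall hypothesis: $\sum_{j=1}^{n-1}|A_j| = \sum_{j=1}^n |A_j| - |A_n| \geq (p + n - 1) - |A_n|$, so $|S| \geq \min\{p - |A_n| + 1, p\} = p - |A_n| + 1$ (using $|A_n| \geq 1$). Then one more application of Cauchy--Davenport gives $|S + A_n| \geq \min\{|S| + |A_n| - 1, p\} \geq \min\{(p - |A_n| + 1) + |A_n| - 1, p\} = \min\{p, p\} = p$, so $S + A_n = \mathbb{Z}_p$, as desired.

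Actually, one can streamline this to avoid the two-case split: simply apply the repeated Cauchy--Davenport bound $|A_1 + \dots + A_n| \geq \min\{\sum_{j=1}^n |A_j| - (n-1), p\}$ directly, and plug in $\sum_{j=1}^n |A_j| \geq p + n - 1$ to get $|A_1 + \dots + A_n| \geq \min\{p, p\} = p$, hence $= \mathbb{Z}_p$. The only thing that genuinely needs an induction is this repeated-sumset version of Cauchy--Davenport, which follows by an immediate induction on $n$: the step is $|A_1 + \dots + A_n| \geq \min\{|A_1 + \dots + A_{n-1}| + |A_n| - 1, p\}$, and if the inner term $|A_1 + \dots + A_{n-1}|$ already equals $p$ we are done, otherwise it is $\geq \sum_{j=1}^{n-1}|A_j| - (n-2)$ and adding $|A_n| - 1$ gives the claim (taking care that clamping at $p$ behaves monotonically).

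I do not expect any real obstacle here; the statement is essentially the multi-set form of Cauchy--Davenport and the only subtlety is bookkeeping with the $\min\{\cdot, p\}$ clamp — making sure that once an intermediate sumset saturates to all of $\mathbb{Z}_p$, the bound is maintained, and that the $-(n-1)$ term accumulates correctly across the $n-1$ pairwise steps. The cleanest writeup is: state and prove the repeated Cauchy--Davenport inequality $|A_1 + \dots + A_n| \geq \min\{\sum_j |A_j| - (n-1),\, p\}$ by induction on $n$ (base case $n=1$ trivial, step via Theorem~\ref{thm:kneser} specialized to prime modulus), then substitute the hypothesis to conclude.
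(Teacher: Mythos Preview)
Your proposal is correct and takes essentially the same approach as the paper: both establish the repeated-sumset inequality $|A_1+\dots+A_n| \ge \min\{\sum_j |A_j| - (n-1),\, p\}$ by induction and then plug in the hypothesis. The only cosmetic difference is that the paper first splits off the case $|Sym(A_1+\dots+A_n)|>1$ and then applies Kneser with a trivial symmetry group in the remaining case, whereas you invoke Cauchy--Davenport directly; the paper itself remarks in a footnote that Cauchy--Davenport suffices here, which is exactly your route.
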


\begin{proof}
Suppose that the symmetry group has size $|Sym(A_1+\ldots+A_n)| > 1$. Since $\mathbb{Z}_p$ has no non-trivial subgroups, this yields $Sym(A_1+\ldots+A_n) = \mathbb{Z}_p$. Since $|Sym(A)| \le |A|$ holds for any set $A$, we obtain $A_1+\ldots+A_n = \mathbb{Z}_p$.

It remains to consider the case $|Sym(A_1+\ldots+A_n)| = 1$. Since $Sym(A) \subseteq Sym(A+B)$ holds for any sets $A,B$, it follows that $|Sym(A_1+\ldots+A_i)| = 1$ for all $1 \le i \le n$. 
We now inductively prove that $|A_1+\ldots+A_i | \geq \min\{\sum_{j=1}^i |A_j| - i +1,p\}$, from which the corollary follows.
The induction base for $i=1$ is trivial. For $i > 1$, we use Kneser's theorem on $A := A_1+\ldots+A_{i-1}$ and $B := A_i$ to obtain
\[ |A_1+\ldots+A_i| \ge \min\left\{ |A_1+\ldots+A_{i-1}| + |A_i| - |Sym(A_1+\ldots+A_i)|, p \right\}. \]
Plugging in $|Sym(A_1+\ldots+A_i)| = 1$ and the induction hypothesis on $|A_1+\ldots+A_{i-1}|$, and simplifying $\min\{\min\{a,p\}+b,p\}$ to $\min\{a+b,p\}$, yields
\[ |A_1+\ldots+A_i| \ge \min\Big\{ \Big(\sum_{j=1}^{i-1} |A_j| - (i-1) +1\Big) + |A_i| - 1, p \Big\} = \min\Big\{ \sum_{j=1}^i |A_j| - i +1, p \Big\}, \]
which finishes the inductive proof.\footnote{We remark that for this lemma it would be sufficient to use the Cauchy-Davenport theorem (see, e.g., \cite[Theorem 5.4]{tao2006additive}) instead of Kneser's theorem. Only for the generalization to non-prime $m$ we need the more general theorem by Kneser.}
%

\end{proof}

\begin{lemma}[Analysis of Algorithm~\ref{alg:prime_sum}]
Given non-empty sets $A_1,\ldots,A_n \subseteq \Z_p$, where $p$ is prime and $n$ is a power of 2,
Algorithm~\ref{alg:prime_sum} correctly computes $A_1+\ldots+A_n$ and runs in deterministic time $O((p +n)^{1+o(1)} + \sum_{i=1}^n |A_i|)$. 
\end{lemma}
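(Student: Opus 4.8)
The plan is to treat correctness and running time separately, in that order. For correctness, I would consider two cases depending on whether the termination criterion ever fires. Suppose first that it never fires, so the algorithm reaches the last line and returns $X_{\log n, 1}$. A straightforward induction on the round $r$ shows that $X_{r,i} = A_{(i-1)2^r+1} + \ldots + A_{i 2^r}$: the base case $r=0$ is the initialization, and the inductive step is exactly the update rule $X_{r,i} = X_{r-1,2i-1} + X_{r-1,2i}$ together with associativity of sumset. Hence $X_{\log n,1} = A_1 + \ldots + A_n$, as desired. Now suppose the criterion fires, say in round $r$ at index $i$, so that $\sum_{j \le i} |X_{r,j}| > p + i - 1$. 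Writing each $X_{r,j}$ (for $j \le i$) as a sumset of $2^r$ consecutive $A$'s by the same induction, the sets $\{X_{r,j}\}_{j\le i}$ are sumsets of a total of $i \cdot 2^r$ of the original sets, call their concatenation $A_{j_1},\ldots,A_{j_N}$ with $N = i2^r$. Since $A_{j_1}+\ldots+A_{j_N} = X_{r,1} + \ldots + X_{r,i}$, and $\sum_{l} |A_{j_l}| \ge \sum_{j\le i}|X_{r,j}| > p + i - 1$... here I need the bound to be at least $p + N - 1$, not merely $p + i - 1$. This is where I would invoke the inequality $|X+Y| \ge |X|+|Y|-1$ that must hold for each earlier sumset (otherwise Corollary~\ref{cor:kneser} gives a non-trivial symmetry group in $\Z_p$, forcing that sumset to already equal $\Z_p$, which one can check is handled — e.g.\ any $X_{r,j}=\Z_p$ makes the criterion fire trivially). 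Iterating $|X+Y|\ge|X|+|Y|-1$ down the subtree rooted at $X_{r,j}$ gives $|X_{r,j}| \ge \big(\sum_{l\in\text{leaves}(j)} |A_l|\big) - (2^r - 1)$, hence $\sum_{j\le i}|X_{r,j}| \ge \sum_{l=1}^N |A_{j_l}| - i(2^r-1)$. Combining with the criterion $\sum_{j\le i}|X_{r,j}| > p+i-1$ yields $\sum_{l=1}^N |A_{j_l}| > p + i - 1 + i(2^r - 1) = p + i 2^r - 1 = p + N - 1$, so Lemma~\ref{lem:cauchy} applies and $A_{j_1}+\ldots+A_{j_N} = \Z_p$. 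Since adding more non-empty sets cannot shrink a sumset and $\Z_p$ is the whole group, $A_1+\ldots+A_n = \Z_p$, matching the output.

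For the running time, I would bound the total cost of all sumset computations performed before the criterion fires (plus the final one if it never fires). Each call $X_{r,i} \leftarrow X_{r-1,2i-1}+X_{r-1,2i}$ costs $|X_{r,i}| \cdot 2^{O(\sqrt{\log|X_{r,i}|\log\log p})} \le |X_{r,i}|\cdot (p+n)^{o(1)}$ by Theorem~\ref{thm:sumset_deterministic}, since $|X_{r,i}| \le p$. So it suffices to show $\sum_{r,i} |X_{r,i}| \le O(p+n) + \sum_i |A_i|$, where the sum is over all computed sets. For a fixed round $r$, as long as the criterion has not fired we have $\sum_{j\le i}|X_{r,j}| \le p + i - 1 \le p + n$ for every $i$ reached; taking $i$ to be the last index reached in round $r$ gives $\sum_i |X_{r,i}| \le p + n$ for that round. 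There are $\log n$ rounds, so the rounds $r \ge 1$ contribute $O((p+n)\log n) = (p+n)^{1+o(1)}$. Round $r=0$ contributes $\sum_i|X_{0,i}| = \sum_i |A_i|$. Actually one must be slightly careful: in the round where the criterion fires, only a prefix $X_{r,1},\ldots,X_{r,i}$ is computed, and the prefix-sum bound still gives $\sum_{j\le i}|X_{r,j}| > p+i-1$ — but the \emph{preceding} prefix sum was $\le p + (i-1) - 1$, and the single set $X_{r,i}$ has size $\le p$, so the total for that round is still $\le 2p + n = O(p+n)$. Summing over everything gives the claimed $O((p+n)^{1+o(1)} + \sum_i|A_i|)$; the $\log n$ factor is absorbed into the $(p+n)^{o(1)}$ (or one can state it as $(p+n)\cdot 2^{O(\sqrt{\log p \log\log p})} \log n$, which is dominated by the stated bound after adjusting the hidden constant, using $\log n \le \log(mk)$-type slack — if $n$ is much larger than $p$ one instead notes $n$ is a power of two bounded by the number of input sets, and the many $\{0\}$-padding sets cost $O(1)$ each).

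The main obstacle I anticipate is the correctness argument in the case where the termination criterion fires: getting the counting right so that the criterion $\sum_{j \le i} |X_{r,j}| > p + i - 1$, which is phrased in terms of the \emph{number of intermediate sets} $i$, translates into the hypothesis $\sum |A_{j_l}| \ge p + N - 1$ of Lemma~\ref{lem:cauchy}, which is phrased in terms of the \emph{number of leaves} $N = i 2^r$. The bridge is precisely the accumulated slack $|X+Y| \ge |X| + |Y| - 1$ from all the earlier sumsets in the subtree, and one has to confirm that this inequality genuinely holds at every earlier node — which is where the interaction with Corollary~\ref{cor:kneser} and the fact that the universe $\Z_p$ has no non-trivial subgroups comes in (over prime $p$, a violated inequality would have already forced an earlier set to be all of $\Z_p$ and triggered the criterion). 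A secondary nuisance is bookkeeping the $\log n$ factor and the padding sets so that the final bound is clean; this is routine but worth stating carefully.
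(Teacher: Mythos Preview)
Your running-time analysis is essentially the paper's (same prefix-sum bookkeeping, same handling of the partial final round, same $\log n$ absorption into the $o(1)$ exponent). The problem is the correctness argument when the termination criterion fires.

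You try to push the size bound from the intermediate sets $X_{r,1},\ldots,X_{r,i}$ down to the leaves $A_{j_1},\ldots,A_{j_N}$ so that Lemma~\ref{lem:cauchy} applies with $N=i2^r$ sets. Both inequalities you invoke point the wrong way. First, ``$\sum_l |A_{j_l}| \ge \sum_{j\le i}|X_{r,j}|$'' is simply false: take $p=97$, $A_1=\{0,\ldots,9\}$, $A_2=\{0,10,\ldots,90\}$, $A_3=A_4=\{0,1\}$; then $\sum_l|A_l|=24$ while $|X_{1,1}|+|X_{1,2}|=97+3=100$, and the criterion fires at $(r,i)=(1,2)$. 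Second, the inequality you derive from iterating $|X+Y|\ge|X|+|Y|-1$, namely $\sum_{j\le i}|X_{r,j}| \ge \sum_l |A_{j_l}| - i(2^r-1)$, is correct but gives an \emph{upper} bound on $\sum_l|A_{j_l}|$; you cannot chain it with the lower bound $\sum_j|X_{r,j}|>p+i-1$ to conclude $\sum_l|A_{j_l}|>p+N-1$. In the example above the latter would say $24>100$.

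The fix is the paper's one-line argument, and it bypasses the whole descent to leaves: Lemma~\ref{lem:cauchy} is stated for \emph{any} non-empty subsets of $\Z_p$, so apply it directly to the $i$ sets $X_{r,1},\ldots,X_{r,i}$. The criterion $\sum_{j\le i}|X_{r,j}|>p+i-1$ is exactly the hypothesis, giving $X_{r,1}+\ldots+X_{r,i}=\Z_p$, hence $A_1+\ldots+A_n=\Z_p$. No need for Corollary~\ref{cor:kneser}, no case analysis on earlier nodes, no leaf accounting.
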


\begin{proof}
If the termination criterion $\sum_{j \leq i } |X_{r,j} | > p + i -1$ is satisfied, then Lemma~\ref{lem:cauchy} implies that $X_{r,1}+\ldots+X_{r,i} = \Z_p$, and hence $A_1+\ldots+A_n = \Z_p$, so we correctly return $\Z_p$. 
Otherwise we reach the last line of Algorithm~\ref{alg:prime_sum}, and we correctly computed $X_{\log n, 1} = A_1 + \ldots + A_n$. This shows correctness.

To analyze the running time, let $(r^\ast,i^\ast)$ be the values of $r$ and $i$ at the end of the execution of the algorithm. In particular, if $r^\ast =  \log n $ we have $i^\ast = 1$. By our use of Theorem \ref{thm:sumset_deterministic}, the total running time of the algorithm is 
\begin{align*}
\sum_{r <r^\ast} \sum_{i=1}^{n/2^r  } |X_{r,i}| \cdot p^{o(1)} + \sum_{i \leq i^\ast} |X_{r^\ast,i}| \cdot p^{o(1)}.
\end{align*}
We use the fact that the termination criterion was not satisfied before step $(r^\ast,i^\ast)$ to obtain:
\begin{align*}
  \sum_{i =1}^{n/2^r} |X_{r,i}| &\leq p+ \frac n{2^r} -1 \qquad \text{for any }r < r^\ast,  \\
  \sum_{i < i^\ast} |X_{r^\ast,i}| &\leq p + \frac{n}{2^{r^\ast}} -1.
\end{align*}
Moreover, we have $|X_{r^\ast,i^\ast}| \le p$. Combining these observations allows us to further bound the running time by
\[ \left( \sum_{r <r^\ast} \left(p+ \frac n{2^r} -1\right) + \left(p + \frac{n}{2^{r^\ast}} -1 \right) + p \right) \cdot p^{o(1)} = (p \log n + n) \cdot p^{o(1)} = (p+n)^{1+o(1)}. \qedhere \]
\end{proof}

%


\section{Algorithm for \boldmath$n$-Fold Sumset Computation}
\label{sec:n_fold_general}


This section proves Theorem~\ref{thm:n_fold_comp}. The main idea is that whenever we detect a non-trivial symmetry group we reduce to a problem over a smaller universe $\Z_d$, for a divisor $d$ of $m$. 

Consider Algorithm~\ref{alg:general_sum}. Suppose we are given sets $A_1,\ldots,A_n \subseteq \Z_p$. We may assume that $n$ is a power of $2$, since otherwise we can add an appropriate number of sets $A_i = \{0\}$ without affecting the sumset. 
We maintain a guess $s$ of the outputsize $|A_1+\ldots+A_n|$. Specifically, $s$ loops over all powers of $2$ starting from $2^0 = 1$, and the algorithm returns the correct result once we reach the first iteration with $s \ge |A_1+\ldots+A_n|$.
Thus, in iteration~$s$ we know that the output size is more than $s/2$, and our primary goal is to test whether $|A_1+\ldots+A_n| \le s$. If this is true then we want to compute the set $A_1+\ldots+A_n$.
We compute $A_1+\ldots+A_n$ in a tree-like bottom-up fashion, by first computing $A_1+A_2, A_3+A_4,\ldots$, then computing $A_1+A_2+A_3+A_4, \ldots$, and so on. The intermediate sets in round $r$ are called $X_{r,1},\ldots,X_{r,n/2^r}$. Our two main ideas now are as follows.

First, due to the presence of non-trivial subgroups in $\Z_m$ when $m$ is not a prime, an intermediate set~$X_{r,i}$ can have a non-trivial symmetry group $Sym(X_{r,i})$. 
As a criterion for a non-trivial symmetry group we test whether $|X_{r,i}| < |X_{r-1,2i}| + |X_{r-1,2i+1}| -1$ (cf.\ Corollary~\ref{cor:kneser} of Kneser's Theorem). 
Once we have found a non-trivial symmetry group $Sym(X_{r,i})$, then also $Sym(A_1+\ldots+A_n) \supseteq Sym(X_{r,i})$ is non-trivial, and thus the output set $A_1+\ldots+A_n$ is periodic, with period length $d = m / |Sym(X_{r,i})|$. It therefore suffices to compute $A_1+\ldots+A_n$ modulo $d$. Hence, we reduce to a problem over a smaller universe $\Z_d$. This case is handled in lines~\ref{line:find_symmetry_group}-\ref{line:recursion}. Note that $d$ may not be the smallest period length for $A_1+A_2+\ldots+A_n$, but since we only need to reduce the problem size, any period suffices for us.

Second, if the criterion $|X_{r,i}| < |X_{r-1,2i}| + |X_{r-1,2i+1}| -1$ is never satisfied, then we can use it to bound the output size. Specifically, we obtain a lower bound for $|X_{\log n,1}|$ in terms of the total intermediate size $\sum_j |X_{r,j}|$. In particular, if the total intermediate size is much larger than $s$, then also the output size is more than $s$. However, we cannot move to the next guess $2s$ yet, since we do not know whether the criterion $|X_{r,i}| < |X_{r-1,2i}| + |X_{r-1,2i+1}| -1$ will be satisfied in future rounds $r' > r$. Nevertheless, we argue that once we have intermediate set size $\sum_{j\le i} |X_{r,j}| \gg s$, then we can ignore the remaining sets $X_{r,j}$, $j > i$, by setting them to $\{0\}$, cf.\ lines~\ref{line:sum_condition}-\ref{line:last_extra_line}. This allows us to bound the total size of all intermediate sets to be linear in the input plus output size.

\begin{algorithm}[!t]\caption{}\label{alg:general_sum}
\begin{algorithmic}[1]
\Procedure{\textsc{nFoldSumset}}{$A_1,\ldots,A_n,m$} \State \Comment{$n$ is a power of 2; non-empty $A_1,\ldots,A_n \subseteq \Z_m$}
\For {$s =1, 2, 4, \ldots, 2^{\left \lceil \log m \right \rceil  }$ }
	\State $X_{0,i} \leftarrow A_i$, for all $i \in [n]$ \label{line:initialize_x}
	\For { $r=1$ to $\log n$}
		\For { $i=1$ to $n/2^r$}
			\State $X_{r,i} \leftarrow X_{r-1,2i-1} + X_{r-1,2i}$ \label{line:update_x} \Comment{sumset computation via Theorem \ref{thm:sumset_deterministic} or \ref{thm:sumset_randomized}}
		\If { $|X_{r,i}| < |X_{r-1,2i-1}| + |X_{r-1,2i}| - 1$ } \label{line:find_symmetry_group}
			\State Compute $Sym(X_{r,i}) $ \Comment{symmetry group computation via Theorem \ref{thm:symmetry_group}}
			\State $d \leftarrow m / |Sym(X_{r,i})|$ \Comment{ $Sym(X_{r,i}) = d \cdot \mathbb{Z}_{m/d}$}
			\State $A_i' \leftarrow A_i \bmod d$, for all $i \in [n]$
			\State \Return $ \textsc{nFoldSumset}(A_1', \ldots, A_n', d) + d\cdot \{0,1,2,\ldots,m/d-1\}$ \label{line:recursion}
		\EndIf
		\If { $\sum_{j \leq i} |X_{r,j}| \ge s+n/2^r $ } \label{line:sum_condition}
			\State $X_{r,j} \leftarrow \{0\}$, for all $i < j \le n/2^r$ \label{line:set_to_zero_x}
			\State \textbf{break} \label{line:last_extra_line}
		\EndIf
		\EndFor
	\EndFor
	\If{ $|X_{\log n, 1 }| \le s$} \label{line:leaf_condition}
		\State \Return $X_{\log n,1}$ \label{line:return}
	\EndIf 
\EndFor
\EndProcedure
\end{algorithmic}
\end{algorithm}

We next prove correctness and then analyze the running time of Algorithm~\ref{alg:general_sum}.

\begin{lemma}[Correctness of Algorithm~\ref{alg:general_sum}] \label{lem:correctness}
  Given non-empty sets $A_1,\ldots,A_n \subseteq \Z_m$, where $n$ is a power of 2, Algorithm~\ref{alg:general_sum} correctly computes $A_1+\ldots+A_n$. 
\end{lemma}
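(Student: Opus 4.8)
The plan is to verify correctness by induction on the universe size $m$ (or, more precisely, on the recursion depth, since each recursive call strictly decreases the universe size from $m$ to a proper divisor $d < m$). The base case is when no non-trivial symmetry group is ever detected, so the algorithm never recurses; the inductive step handles the case where line~\ref{line:recursion} is executed.

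\textbf{Base case (no recursion).} Suppose the call never reaches line~\ref{line:recursion}, i.e., for every computed sumset $X_{r,i}$ we have $|X_{r,i}| \ge |X_{r-1,2i-1}| + |X_{r-1,2i}| - 1$. I would first argue that whenever we hit line~\ref{line:set_to_zero_x} and set $X_{r,j} \leftarrow \{0\}$ for $j > i$, this does not change the final sumset $X_{\log n,1}$. Indeed, replacing a factor $X_{r,j}$ by $\{0\}$ can only shrink $A_1+\ldots+A_n$, so if the true output size is $\le s$ this replacement is harmless — but I need to be careful: the point is that when $\sum_{j\le i}|X_{r,j}| \ge s + n/2^r$, the failure-of-Kneser inequality forces $|X_{\log n,1}| \ge \sum_{j\le i}|X_{r,j}| - (\text{number of remaining merges}) > s$, so in this iteration of $s$ we are going to reject at line~\ref{line:leaf_condition} anyway (or more precisely, the set $X_{\log n,1}$ we compute in this truncated run has size $> s$, so we do not return it). Hence truncation only affects iterations that were going to be discarded. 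Then, in the first iteration $s$ with $s \ge |A_1+\ldots+A_n|$, I claim no truncation happens and $|X_{\log n,1}| \le s$, so we return the genuinely correct value $X_{\log n,1} = A_1 + \ldots + A_n$. The quantitative claim needed here — that repeatedly applying $|X + Y| \ge |X| + |Y| - 1$ up the tree yields $|X_{\log n,1}| \ge \sum_{j} |X_{r,j}| - (n/2^r - 1)$ for the current round $r$ — is a straightforward telescoping induction on the remaining $\log n - r$ rounds.

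\textbf{Inductive step (recursion at line~\ref{line:recursion}).} When $|X_{r,i}| < |X_{r-1,2i-1}| + |X_{r-1,2i}| - 1$, Corollary~\ref{cor:kneser} gives $|Sym(X_{r,i})| > 1$. By Theorem~\ref{thm:symmetry_group} we correctly compute $Sym(X_{r,i}) = d\cdot \Z_{m/d}$ with $d = m/|Sym(X_{r,i})| < m$. Since $Sym(X) \subseteq Sym(X+Y)$ for all $X,Y$, and $X_{r,i}$ is a partial sumset of $A_1,\ldots,A_n$, we get $Sym(X_{r,i}) \subseteq Sym(A_1 + \ldots + A_n)$, hence $A_1 + \ldots + A_n$ is invariant under translation by $d$, i.e. $A_1+\ldots+A_n = (A_1+\ldots+A_n) + d\cdot\{0,1,\ldots,m/d-1\}$ and it is determined by its image modulo $d$. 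The key algebraic identity to check is that $(A_1+\ldots+A_n) \bmod d = (A_1 \bmod d) + \ldots + (A_n \bmod d)$, where the right-hand sum is taken in $\Z_d$ — this holds because reduction mod $d$ is a group homomorphism $\Z_m \to \Z_d$ (using $d \mid m$). Therefore $A_1+\ldots+A_n = \big((A_1'+\ldots+A_n' \text{ in } \Z_d)\big) + d\cdot\{0,\ldots,m/d-1\}$, which is exactly what line~\ref{line:recursion} returns, with the inner sumset computed correctly by the induction hypothesis applied to the instance over $\Z_d$ (the sets $A_i'$ are non-empty since the $A_i$ are). Termination of the recursion follows since $d < m$ strictly decreases.

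\textbf{Main obstacle.} I expect the delicate part to be the bookkeeping around the truncation in lines~\ref{line:sum_condition}--\ref{line:last_extra_line}: one must argue simultaneously that (a) in a ``good'' iteration $s \ge |A_1+\ldots+A_n|$ the truncation is never triggered (so the computed $X_{\log n,1}$ is the true sumset), and (b) in a ``bad'' iteration the truncation, even though it changes which set is computed, never causes us to return at line~\ref{line:return} with an incorrect set — because the size test $|X_{\log n,1}| \le s$ at line~\ref{line:leaf_condition} fails. Both hinge on the telescoped Kneser inequality $|X_{\log n,1}| \ge \sum_{j \le i}|X_{r,j}| - (n/2^r - 1) \ge s$ once the break condition $\sum_{j\le i}|X_{r,j}| \ge s + n/2^r$ is met; making this inequality survive correctly through the truncated computation (where later factors are $\{0\}$, which only adds $0$ and thus cannot decrease the running intermediate sum count) is the crux. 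Everything else — homomorphism property of $\bmod\ d$, correctness of the sumset and symmetry-group subroutines, the power-of-two padding — is routine.
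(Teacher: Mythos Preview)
Your overall plan matches the paper's: induction on $m$, with the recursive branch handled via Corollary~\ref{cor:kneser} and the homomorphism $\Z_m \to \Z_d$, and the non-recursive branch handled by arguing that once the break at line~\ref{line:sum_condition} fires, the eventual $|X_{\log n,1}|$ must exceed~$s$. The gap is precisely in the bound you lean on for that last step. Your telescoped inequality $|X_{\log n,1}| \ge \sum_{j\le i}|X_{r,j}| - (n/2^r - 1)$ survives the $\{0\}$'s produced by the \emph{first} truncation (those merge harmlessly, as your parenthetical says), but it does \emph{not} survive further truncations in rounds $r' > r$. When the break fires again at some $(r',i')$, the sets $X_{r',j}$ for $j>i'$ are overwritten by $\{0\}$ \emph{without} being computed, so no Kneser inequality relates them to round $r'-1$, and mass that was present at round $r$ can simply vanish. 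Concretely: take $n=8$, $s=1$, and arrange $|X_{1,1}|=|X_{1,2}|=2$, $|X_{1,3}|=100$; the first break is at $(1,3)$ and your bound predicts $|X_{3,1}| \ge 104-3=101$, but if $|X_{2,1}|=|X_{1,1}+X_{1,2}|=3$ then the break already fires at $(2,1)$, $X_{2,2}$ is reset to $\{0\}$, the large $X_{1,3}$ is never used, and one ends with $|X_{3,1}|=3$.

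The paper repairs this by tracking a weaker invariant that \emph{is} stable under repeated truncation: once the break has fired, one maintains $\sum_{j=1}^{n/2^{r'}} |X_{r',j}| \ge s + n/2^{r'}$ for every subsequent round $r'$. If round $r'+1$ has no break, the per-merge Kneser inequality gives $\sum_j |X_{r'+1,j}| \ge \sum_j |X_{r',j}| - n/2^{r'+1} \ge s + n/2^{r'+1}$; if round $r'+1$ does break at some $i'$, then the break condition itself already guarantees $\sum_{j\le i'}|X_{r'+1,j}| \ge s+n/2^{r'+1}$, and the trailing $\{0\}$'s only add to this. At $r'=\log n$ this yields $|X_{\log n,1}| \ge s+1$, which is exactly what both your (a) and (b) need. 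Everything else in your sketch --- that every $X_{r,i}$ is a sumset of some subset of the $A_j$'s (hence $|X_{r,i}|\le |A_1+\ldots+A_n|$ and $Sym(X_{r,i})\subseteq Sym(A_1+\ldots+A_n)$), and the $\bmod\ d$ reduction --- is correct and matches the paper.
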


\begin{proof} 
Note that without lines~\ref{line:sum_condition}-\ref{line:last_extra_line}, 
we would compute the sumset in a straightforward bottom-up tree-like fashion as $((A_1+A_2)+(A_3+A_4))+\ldots$, and thus the intermediate set $X_{r,i}$ would be equal to $A_x + A_{x+1} + \ldots + A_y$ for $x = (i-1) 2^r + 1$ and $y = i 2^r$. 
In the additional lines~\ref{line:sum_condition}-\ref{line:last_extra_line}, we set some intermediate sets $X_{r,i}$ to $\{0\}$. Thus, we may lose some summands, but any intermediate set~$X_{r,i}$ still corresponds to the sumset of a subset of its summands $A_x,A_{x+1},\ldots,A_y$. More precisely, the set $X_{r,i}$ satisfies $X_{r,i} = A_{z_1} + A_{z_2} + \ldots + A_{z_\ell}$ for some $\{z_1,\ldots,z_\ell\} \subseteq \{ x, x+1, \ldots, y\}$, with the understanding that $X_{r,i} = \{0\}$ if $\ell = 0$.
(This property holds initially in line~\ref{line:initialize_x} and it continues to hold when we set $X_{r,i}$ in lines~\ref{line:update_x} and \ref{line:set_to_zero_x}.)
In particular, we always have 
\begin{align} \label{eq:XSymSubset} 
  Sym(X_{r,i}) = Sym(A_{z_1} + A_{z_2} + \ldots + A_{z_\ell}) \subseteq Sym(A_1+\ldots+A_n).
\end{align}
Moreover, we also infer
\begin{align} \label{eq:XsizeUpperBound} 
  |X_{r,i}| = |A_{z_1} + A_{z_2} + \ldots + A_{z_\ell}| \le |A_1+\ldots+A_n|. 
\end{align}

We shall perform induction on the universe size $m$. For the base case $m=1$, the result is obvious. For larger $m$, we consider the following two cases.

\subparagraph{Case 1:} \emph{At some point in the execution, the criterion $|X_{r,i}| < |X_{r-1,2i-1}| + |X_{r-1,2i}| - 1$ in line~\ref{line:find_symmetry_group} is satisfied.}
Then by Corollary~\ref{cor:kneser}, $Sym(X_{r,i})$ is non-trivial and hence $Sym( A_1+\ldots+A_n) \supseteq Sym(X_{r,i})$ is also non-trivial. We make use of the fact that all subgroups of $\Z_m$ are of the form $d \cdot \Z_{m/d}$, where $d$ divides $m$. In particular, $Sym(X_{r,i}) = d \cdot \Z_{m/d}$ for $d := m / |Sym(X_{r,i})|$. This means that $A_1+\ldots+A_n$ is cyclic with period length $d$. It follows that for $A_i' := A_i \bmod d$ we have (using the induction hypothesis on $d$)
\[ A_1+\ldots+A_n = \textsc{nFoldSumset}(A_1',\ldots,A_n',d) + d \cdot \big\{0,1,\ldots,\tfrac md - 1\big\}. \]
This shows correctness of lines \ref{line:find_symmetry_group}-\ref{line:recursion}.

\subparagraph{Case 2:} \emph{Lines \ref{line:find_symmetry_group}-\ref{line:recursion} are never executed.} That is, for each computed set $X_{r,i}$ in line~\ref{line:update_x} we have
\begin{align} \label{eq:Xsizebound}
  |X_{r,i}| \ge |X_{r-1,2i-1}| + |X_{r-1,2i}| - 1.
\end{align}
We use inequality (\ref{eq:Xsizebound}) to analyze line~\ref{line:sum_condition}. 
Fix any $s \in \{1,2,4,\ldots,2^{\lceil \log m \rceil}\}$, and consider iteration~$s$. 

\begin{claim} \label{cla:iffX}
In iteration $s$, we have $|X_{\log n,1}| > s$ if and only if $|A_1+\ldots+A_n| > s$. Moreover, if $|X_{\log n,1}| \le s$ then $X_{\log n,1} = A_1+\ldots+A_n$.
\end{claim}

Comparing this claim with lines~\ref{line:leaf_condition}-\ref{line:return}, we see that if our guess $s$ for the output size is too small, i.e., $|A_1+\ldots+A_n| > s$, then the algorithm proceeds with the next larger guess. Otherwise, the algorithm correctly computes $X_{\log n,1} = A_1+\ldots+A_n$ and returns this set. It remains to prove the claim.

\begin{proof}
The `only if' part follows from the bound $|X_{\log n,1}| \le |A_1+\ldots+A_n|$ by (\ref{eq:XsizeUpperBound}).

For the `if' part, we consider two cases:

\subparagraph{Case A:} If the criterion $\sum_{j \leq i} |X_{r,j}| \ge s+n/2^r$ in line~\ref{line:sum_condition} is never satisfied, then the algorithm computes $X_{\log n,1} = A_1+\ldots+A_n$ in a straightforward manner, and thus $|X_{\log n,1}| = |A_1+\ldots+A_n|$. 

\subparagraph{Case B:} If the criterion $\sum_{j \leq i} |X_{r,j}| \ge s+n/2^r$ in line~\ref{line:sum_condition} is satisfied in some iteration $r$, then the following bound shows that it will also be satisfied in iteration $r+1$ for some value of $i$:
\[ \sum_{j=1}^{n/2^{r+1}} |X_{r+1,j}| \stackrel{(\ref{eq:Xsizebound})}{\ge} \sum_{j=1}^{n/2^r} |X_{r,j}| - \frac n{2^{r+1}} \ge \Big(s + \frac n{2^r}\Big) - \frac n{2^{r+1}} = s + \frac n{2^{r+1}}. \]
This nearly proves that the criterion is satisfied in iteration $r+1$, but it ignores that some of the sets $X_{r+1,j}$ could be set to $\{0\}$ by lines~\ref{line:sum_condition}-\ref{line:last_extra_line}. However, when this happens then by the criterion in line~\ref{line:sum_condition} we nevertheless have 
  $\sum_{j=1}^{n/2^{r+1}} |X_{r+1,j}| \ge s+n/2^{r+1}$.

Therefore, if the criterion in line~\ref{line:sum_condition} is satisfied in some iteration $r$, then it is also satisfied for $r = \log n$, which yields $|X_{\log n,1}| \ge s+1 > s$.
 
\medskip
In either case, we obtain $|X_{\log n,1}| > s$ if $|A_1+\ldots+A_n| > s$. This proves the equivalence. 

For the second claim, note that $|X_{\log n,1}| \le s$ only happens in Case A, and in this case we showed that $X_{\log n,1} = A_1+\ldots+A_n$.
\end{proof}

\medskip
In summary, if at any point during the course of the algorithm the criterion $|X_{r,i}| < |X_{r-1,2i-1}| + |X_{r-1,2i}| - 1$ in line~\ref{line:find_symmetry_group} is satisfied (Case 1), then we have found a non-trivial symmetry group, and we can move to a problem over a smaller universe $\Z_d$, where $d < m$ is a divisor of $m$. Correctness then follows by induction on $m$. If this never happens (Case 2), then the algorithm behaves as follows. We have an increasing guess $s$ for the output size $|A_1+\ldots+A_n|$. When this guess is too small, at some point the criterion $\sum_{j \leq i} |X_{r,j}| \ge s+n/2^r $ in line~\ref{line:sum_condition} is satisfied, from which point on we set some of the sets $X_{r,i}$ to $\{0\}$, but we ensure that we end up with $|X_{\log n,1}| > s$. This allows us to conclude that our guess $s$ was too small, so we increase it. When our guess $s$ reaches the smallest power of $2$ that is at least $|A_1+\ldots+A_n|$, then the algorithm correctly computes $X_{\log n,1} = A_1+\ldots+A_n$ and returns this set.
\end{proof}

\begin{lemma}[Running Time of Algorithm~\ref{alg:general_sum}] \label{lem:runningtimegeneral}
  Let $k := |A_1|+\ldots+|A_n| + |A_1+\ldots+A_n|$ be the total input plus output size. Depending on whether we use Theorem \ref{thm:sumset_deterministic} or Theorem~\ref{thm:sumset_randomized} for sumset computation, Algorithm~\ref{alg:general_sum} is
  \begin{enumerate}
    \item deterministic and runs in time $k \cdot 2^{ O(\sqrt{ \log k \log \log m}) } \cdot \log m$, or
    \item randomized and runs in expected time $O(k \cdot \polylog (mk))$.
  \end{enumerate}
\end{lemma}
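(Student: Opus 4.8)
The plan is to bound, for a single invocation of \textsc{nFoldSumset} on a universe $\Z_m$ (ignoring the cost of recursive calls for now), the total time spent across all guesses $s$, and then to argue that the recursion on smaller universes adds only a $\log m$ or $\polylog m$ overhead. The dominant cost inside one invocation, for a fixed $s$, is the sum over all computed intermediate sets $X_{r,i}$ of the cost of the sumset computation in line~\ref{line:update_x}, which by Theorem~\ref{thm:sumset_deterministic} is $|X_{r,i}| \cdot 2^{O(\sqrt{\log |X_{r,i}| \log\log m})}$ (deterministic) or $|X_{r,i}| \cdot \polylog m$ (randomized), plus the $O(|X_{r,i}|)$ cost of the symmetry-group test and of Theorem~\ref{thm:symmetry_group}. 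So the whole argument reduces to bounding $\sum_{r,i} |X_{r,i}|$ for a fixed $s$, and then summing over the $O(\log m)$ values of~$s$.

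The key step is the size bound $\sum_{r} \sum_{i} |X_{r,i}| = O(s \log n + n \log n)$ for a fixed guess $s$ (restricted to the case where lines~\ref{line:find_symmetry_group}--\ref{line:recursion} are never triggered — if they are triggered, the invocation stops immediately after one extra sumset, which is cheap). Here is how I would get it: in any round $r$, let $i^\ast_r$ be the index at which line~\ref{line:sum_condition} first fires (or $i^\ast_r = n/2^r$ if it never fires in round $r$). Before that point we have $\sum_{j < i^\ast_r} |X_{r,j}| < s + n/2^r$ by the negation of the line~\ref{line:sum_condition} criterion; and the single set $X_{r,i^\ast_r}$ has size at most $|A_1+\ldots+A_n| \le k$ by~(\ref{eq:XsizeUpperBound}); all sets $X_{r,j}$ with $j > i^\ast_r$ equal $\{0\}$ and contribute total size at most $n/2^r$. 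Hence $\sum_j |X_{r,j}| \le s + 2n/2^r + k$. Wait — the bare $k$ here is too lossy if it recurs in every round; the fix is that $s$ is the \emph{current} guess and the algorithm only reaches the line~\ref{line:return} check with $|X_{\log n,1}| \le s$ at $s = \Theta(|A_1+\ldots+A_n|)$, so throughout we may treat $s = O(k)$, and more importantly for all smaller guesses the termination criterion in line~\ref{line:sum_condition} does get triggered in round $r=\log n$, so we never waste more than $\sum_r (s + 2n/2^r) = O(s\log n + n)$ per guess. Summing the geometric series over guesses $s = 1,2,4,\ldots$ up to $O(k)$ gives $\sum_s O(s \log n + n) = O(k \log n + n \log k)$, i.e.\ $O(k \log k)$ in total (using $n \le k$). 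Plugging this into the per-set cost gives deterministic time $k \cdot 2^{O(\sqrt{\log k \log\log m})} \cdot \log k$ and randomized expected time $O(k \cdot \polylog(mk))$ for one invocation.

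Finally, for the recursion: each recursive call in line~\ref{line:recursion} is on a universe $\Z_d$ with $d \mid m$, $d < m$, and with input sets $A_i' = A_i \bmod d$ whose total size is at most $\sum_i |A_i| \le k$, and whose output is a subset of $\Z_d$ of size at most $|A_1+\ldots+A_n| \le k$, so the parameter $k$ does not grow down the recursion. Since $d$ strictly divides $m$ at each step, the recursion depth is $O(\log m)$, so the overall running time is the single-invocation bound multiplied by $O(\log m)$; absorbing $\log k \le \polylog(mk)$ and simplifying $2^{O(\sqrt{\log k\log\log m})} \cdot \log^2 m = 2^{O(\sqrt{\log k \log\log m})} \cdot \log m$ (the extra logs are swallowed by adjusting the constant in the exponent when $k$ is large, and handled by the separate $k^{O(1)}$ bound when $k$ is small) yields the two claimed bounds. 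The main obstacle I anticipate is precisely this last bookkeeping — making sure that re-running every guess $s$ from scratch inside every recursive call does not blow up the total by more than the claimed $\log m$, which needs the observation that the termination criterion in line~\ref{line:sum_condition} caps the work per guess independently of how large the true output is, together with the geometric summation over $s$.
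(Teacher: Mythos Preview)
Your approach is essentially the paper's: bound $\sum_{r,i}|X_{r,i}|$ per guess $s$ using the negation of the line~\ref{line:sum_condition} criterion plus the bound~(\ref{eq:XsizeUpperBound}) on the one ``overshooting'' set, sum over rounds and guesses, then multiply by the recursion depth $O(\log m)$. One small bookkeeping slip: you cannot actually drop the $+k$ contribution of $X_{r,i^\ast_r}$ to get $\sum_j|X_{r,j}|=O(s+n/2^r)$ per round. The fact that the termination criterion fires in round $\log n$ for small $s$ says nothing about how large the single overshooting set in each earlier round can be; it can still have size up to $|A_1+\ldots+A_n|$. The paper simply keeps the term and uses $s\le s^\ast=O(k)$ to bound each round by $O(k)$, giving $O(k\log n)$ per guess and $O(k\log n\cdot\log k)=O(k\log^2 k)$ over all guesses; the extra $\log k$ factors are absorbed into $2^{O(\sqrt{\log k\log\log m})}$ just as you do at the end. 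With that correction your argument matches the paper's proof.
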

\begin{proof} Let $T(k,m)$ be the running time of our algorithm.
Note that we have at most one call to a recursive subproblem in line~\ref{line:recursion}, incurring time $T(k,d)$, where $d$ is a divisor of $m$ and thus $d \le m/2$. 

Let $s^\ast$ be the smallest power of $2$ that is at least $|A_1+\ldots+A_n|$.
Similarly as in the proof of correctness, we see that the algorithm only performs iterations $s$ from $1$ to at most $s^\ast$, since we return the correct output in iteration $s^\ast$, unless we call a recursive subproblem before that.

We bound the running time in iteration $s$ as follows. For any iteration $r$, let $X_{r,i(r)}$ be the last set that we computed in line~\ref{line:update_x}. (That is, after computing $X_{r,i(r)}$ we either move to a recursive call, or we set all remaining sets $X_{r,j} \leftarrow \{0\}$, for any $j > i(r)$.)
Note that $\sum_{j < i(r)} |X_{r,j}| < s + n/2^r$, since otherwise we would have set $X_{r,i(r)} \leftarrow \{0\}$ and not computed it in line~\ref{line:update_x}. Moreover, $|X_{r,i(r)}| \le |A_1+\ldots+A_n| \le k$ by~(\ref{eq:XsizeUpperBound}). By Theorem~\ref{thm:sumset_deterministic}, computing $X_{r,i}$ takes time 
\[|X_{r,i}| \cdot 2^{ O(\sqrt{ \log |X_{r,i}| \log \log m}) } \le |X_{r,i}| \cdot 2^{ O(\sqrt{ \log k \log \log m}) } . \] 
Therefore, the total time spent in iteration $r$ is bounded by 
\[ \left(s + \frac n{2^r} + k\right) \cdot 2^{ O(\sqrt{ \log k \log \log m}) } \le k \cdot 2^{ O(\sqrt{ \log k \log \log m}) }, \]
for any $1 \le s \le s^\ast = O(k)$. Summing over all iterations $r$ adds a factor $\log n \le \log k \le 2^{O(\sqrt{\log k})}$, which can be ignored. Summing over all iterations $s$ adds a factor $\log s^\ast = O(\log k)$, which can also be ignored. Adding the potential recursive call, the total running time is 
\[ T(k,m) \le k \cdot 2^{ O(\sqrt{ \log k \log \log m}) }  + T(k,m/2). \]
This solves to total time $k \cdot 2^{ O(\sqrt{ \log k \log \log m}) } \cdot \log (m)$. 

The analysis of the randomized variant is analogous.
\end{proof}


\begin{proof}[Proof of Theorem~\ref{thm:n_fold_comp}]
Algorithm~\ref{alg:general_sum} almost proves the theorem, except that the deterministic running time shown in Lemma~\ref{lem:runningtimegeneral} is $k \cdot 2^{ O(\sqrt{ \log k \log \log m}) } \cdot \log m$ instead of the promised $k \cdot 2^{ O(\sqrt{ \log k \log \log m}) }$. Note that the former can be bounded by the latter unless $k \le \log^c m$, for some absolute constant $c$. 
    In the case $k \le \log^c m$ we switch to a different algorithm. Specifically, we simply compute $((A_1+A_2)+A_3)+\ldots +A_n$ in a linear fashion, in each step using a naive sumset computation that computes $A+B$ in time $\widetilde O(|A| \cdot |B|)$. Since each intermediate result has size at most $k$, each sumset computation takes time $O(k^2)$. Since $n \le k$, in total this simple algorithm runs in time $O(k^3)$. Finally, since $k \le \log^c m$ we can bound $O(k^3) \le 2^{O(\sqrt{\log k \log \log m})}$. This shows the promised running time also in case $k \le \log^c m$. We obtain the promised guarantees even if we do not know $k$, by running both algorithms in parallel until the first one finishes.
\end{proof}

\section{Output-sensitive Sumset Computation}
\label{sec:sumset_comp}

Recall that in sumset computation we are given sets $A,B \subseteq \Z_m$ and the task is to compute $A+B$. 
In this section we present a deterministic algorithm for sumset computation. We also show a generalization to convolution of non-negative vectors, proving Theorem~\ref{thm:sumset_deterministic}.

Chan and Lewenstein~\cite{CL15} designed very efficient algorithms for sumset computation in a specialized setting, in which the input additionally contains a set $T$ promised to be a superset of $A+B$. Their running time is close to linear in $|T|$. Specifically, they proved the following lemma.

\begin{lemma}[FFT Lemma from~\cite{CL15}] \label{lemma:chan_lewenstein}
Given sets $A,B \subseteq \{0,1,\ldots,m-1\}$ and given a set $T$ which is known to be a superset of $A+B$, we can compute $A+B$ (over $\Z$)
\begin{enumerate}
\item[(1)] by a randomized Las Vegas algorithm in $O(|T| \polylog m)$ expected time, or
\item[(2)] by a deterministic algorithm in $|T| \cdot 2^{O(\sqrt{\log |T| \log \log m})}$ time
\end{enumerate}
The running time bounds are taken from \cite[Section 8]{CL15}.
\end{lemma}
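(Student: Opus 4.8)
The plan is to reduce the task to a small number of \emph{dense} polynomial multiplications over a progressively shrinking universe, following Chan and Lewenstein \cite{CL15}; I only sketch the ideas, since the precise deterministic construction (and hence the exact overhead) is the content of \cite[Section~8]{CL15}. First I would record the trivial size bound: fixing any $b_0\in B$, the map $a\mapsto a+b_0$ injects $A$ into $A+B\subseteq T$, so $|A|,|B|\le|A+B|\le|T|$; writing $n:=|T|$ and $N:=m$, everything in sight has size $O(n)$.

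Part~(1) is essentially immediate: since $|A+B|\le n$, the output-sensitive Las~Vegas algorithm of Theorem~\ref{thm:sumset_randomized} computes $A+B$ in expected time $O(|A+B|\cdot\polylog m)=O(n\cdot\polylog m)$, so the promised superset $T$ is not even needed there.

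The engine behind part~(2) is hashing by a prime modulus followed by one FFT. Given a prime $p$, build in time $O(n)$ the length-$p$ vectors $\bar A[j]=|\{a\in A:a\bmod p=j\}|$ and $\bar B[j]=|\{b\in B:b\bmod p=j\}|$, and compute $\bar C:=\bar A\conv\bar B\bmod(X^p-1)$ by a single length-$p$ FFT in time $O(p\log p)$. Since all coefficients are non-negative there is no cancellation, so for any residue class $j$ containing exactly one element $t$ of $T$ we get $t\in A+B$ iff $\bar C[j]>0$; hence one call certifies membership for every ``isolated'' element of $T$. Choosing $p$ a random prime of size $n\cdot\polylog(mn)$ makes the expected number of colliding pairs of $T$ at most $n/2$, so a constant fraction of $T$ is certified; recursing with a fresh prime on the unresolved candidates finishes in $O(\log n)$ rounds and $O(n\cdot\polylog(mn))$ total expected time — which also re-proves part~(1) in a self-contained way.

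The real work in part~(2) — and the step I expect to be the main obstacle — is \emph{derandomizing} the modulus. We cannot afford one that is injective on all of $T$ (this would need $p=\tilde O(n^2)$ and an FFT of that size), and since $A+B$ is unknown there is no cheap certificate that a fixed modulus is good. The fix, following \cite[Section~8]{CL15}, is to shrink the universe $[0,N)$ deterministically in \emph{several stages}, each stage passing to a polynomially smaller universe via a modulus drawn from an explicit, moderately small family that provably separates enough pairs; once the universe has size $\poly(n)$ a single dense FFT costs $\tilde O(n)$. A crude count gives a per-stage branching polynomial in $\log N$ over a fixed $\Theta(\log\log N)$ stages, i.e.\ a $(\log N)^{\Theta(\log\log N)}$ blowup, which is too large; the remedy is to introduce a parameter $s$ trading the number of stages (or the branching per stage) against the per-stage overhead, giving a running time of the form $n\cdot\bigl((\log N)^{O(s)}+n^{O(1/s)}\bigr)$, which is minimized at $s=\Theta\!\bigl(\sqrt{\log n/\log\log N}\bigr)$ and yields the claimed factor $2^{O(\sqrt{\log n\log\log N})}$. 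What remains is routine: making the ``separates enough pairs'' guarantee quantitative, tracking set sizes down the recursion, and using non-negativity of the $0/1$ coefficients to keep the isolated-residue certification free of false positives.
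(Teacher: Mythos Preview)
The paper does not prove this lemma at all: it is quoted verbatim as a black-box result from Chan and Lewenstein~\cite{CL15}, with the running-time bounds attributed to \cite[Section~8]{CL15}, and the paper immediately moves on to \emph{use} it (in Lemma~\ref{lemma:sumset_both}). So there is no in-paper proof to compare against; your sketch is simply a reconstruction of the Chan--Lewenstein argument rather than a reproduction of anything the authors wrote.

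That said, two remarks on the sketch itself. Your treatment of part~(1) via Theorem~\ref{thm:sumset_randomized} is logically fine but somewhat backwards relative to the paper's internal structure: the authors use the FFT Lemma (including its randomized clause) as the primitive from which they \emph{re-derive} Theorem~\ref{thm:sumset_randomized} in Lemma~\ref{lemma:sumset_both}; invoking Theorem~\ref{thm:sumset_randomized} to prove the FFT Lemma would make that derivation circular within the paper, even though externally both results stand on independent citations (\cite{CH02} and \cite{CL15}). For part~(2), your outline of the multi-stage universe reduction with a tunable parameter~$s$ balancing branching against stage count is indeed the shape of the Chan--Lewenstein derandomization, and the optimization yielding the $2^{O(\sqrt{\log n\log\log N})}$ overhead is correct in spirit; the parts you flag as ``what remains is routine'' (the explicit separating family and the quantitative pair-separation guarantee) are exactly where the technical content of \cite[Section~8]{CL15} lies.
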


Here we show a trick that yields the same time bounds in the standard setting (without the additional set $T$). We note that it makes no significant difference whether we compute $A+B$ over $\Z_m$ or over $\Z$, as discussed also in the introduction. We choose to work over $\Z_m$, for consistency with the rest of this paper.

\begin{lemma} \label{lemma:sumset_both}
Given sets $A,B \subseteq \Z_m$, we can compute $A+B$ (over $\Z_m$)
\begin{enumerate}
\item[(1)] by a randomized Las Vegas algorithm in $O(|A+B| \polylog m)$ expected time, or
\item[(2)] by a deterministic algorithm in $|A+B| \cdot 2^{O(\sqrt{\log |A+B| \log \log m})}$ time.
\end{enumerate}
\end{lemma}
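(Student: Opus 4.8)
The plan is to reduce the general sumset computation problem to the setting of the FFT Lemma (Lemma~\ref{lemma:chan_lewenstein}), where a superset $T \supseteq A+B$ is known in advance. The obstacle is precisely that we do \emph{not} know such a superset; computing one seems as hard as computing $A+B$ itself. The key idea is to build the superset incrementally using a doubling/guessing strategy on the output size, combined with a partition of the universe into blocks so that each guess can be verified cheaply.

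First I would handle the trivial reduction between $\Z_m$ and $\Z$ (as in Footnote~\ref{foot:one}): it suffices to compute $A+B$ over $\Z$ for $A,B \subseteq \{0,\ldots,m-1\}$, embedding into $\Z_{2m}$ if necessary, so from now on we work over the integers with universe size $O(m)$. Next, the main engine: partition $\{0,1,\ldots,m-1\}$ into consecutive blocks and observe that $A+B$ restricted to a coarse grid (i.e.\ knowing which ``buckets'' of size roughly $u$ contain an element of $A+B$) can be computed by running the FFT Lemma on $A \bmod$-scaled and $B \bmod$-scaled versions — more precisely, on $\lfloor A/u \rfloor$ and $\lfloor B/u \rfloor$ with the trivial superset being the full scaled-down universe of size $O(m/u)$. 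If we choose $u$ so that $m/u$ is comparable to our current guess $s$ for $|A+B|$, then this coarse computation costs only $s \cdot 2^{O(\sqrt{\log s \log\log m})}$. The set of non-empty buckets, expanded by a factor $u$ and unioned with neighboring buckets to account for carries, gives a candidate superset $T$ of size $O(s \cdot u) = O(m)$ — which is too big. So instead I would iterate: refine the bucket resolution geometrically, at each level using the previous level's non-empty buckets (not the whole universe) as the superset fed into the FFT Lemma. At level $j$ the bucket size is $u/2^j$, the superset has size $O(s)$ (the previous level's non-empty buckets, each split in two, plus carry-neighbors), and after $O(\log u) = O(\log m)$ levels we reach resolution $1$, i.e.\ the exact set $A+B$, \emph{provided} $|A+B| \le s$.

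The doubling wrapper then goes as follows: for $s = 1, 2, 4, \ldots$, run the above refinement procedure; after it terminates (or once any intermediate superset would exceed, say, $4s$), check whether the resulting set has size $\le s$; if so, verify correctness (a single FFT-based or sumset-based check that $A+B$ equals the computed set — or simply trust the construction, which is exact when $|A+B|\le s$) and return it, otherwise double $s$. Since the true value $s^\ast := |A+B|$ is reached within one doubling, the total cost is dominated by the last iteration, namely $O(\log m)$ calls to the FFT Lemma each on an instance of size $O(s^\ast)$, giving $|A+B| \cdot 2^{O(\sqrt{\log|A+B|\log\log m})} \cdot O(\log m)$, which is absorbed into $|A+B| \cdot 2^{O(\sqrt{\log|A+B|\log\log m})}$ (and for the randomized variant, $O(|A+B|\polylog m)$). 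The main thing to get right is the carry/boundary bookkeeping when refining buckets — an element $a+b$ with $a,b$ in given buckets can land in one of a constant number of finer buckets at the next level — and ensuring that the superset passed to the FFT Lemma never blows up beyond $O(s)$; this is where I expect the bulk of the careful (but routine) argument to lie, and it is exactly the ``trick'' alluded to before the lemma statement.
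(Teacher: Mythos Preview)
Your approach is correct and shares the paper's core insight --- compute $A+B$ at a coarser resolution first, use the result to build a small superset $T$ for the FFT Lemma at the next finer resolution, and iterate $O(\log m)$ times --- but the execution differs in two ways.

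First, the paper coarsens via \emph{modular reduction} rather than integer division: after padding $m$ to a power of~$2$, it recurses on $m' = m/2$, computing $S := (A \bmod m') + (B \bmod m')$ over $\Z_{m'}$; since $S = (A+B) \bmod m'$, the set $T := S + \{0,m',2m',3m'\}$ is a superset of $A+B$ over~$\Z$ with $|T| \le 4|A+B|$, and one invokes the FFT Lemma with this $T$. This is dual to your $\lfloor A/u\rfloor$ scheme (low bits instead of high bits) and the carry bookkeeping is equally simple.

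Second, and more to the point, the paper needs no doubling wrapper on a guess $s$. Because $|(A+B)\bmod m'| \le |A+B|$ holds unconditionally, every level of the recursion already costs at most $4|A+B|$ times the FFT-Lemma overhead; one simply recurses from $m$ down to~$1$. Your division-based refinement has the same property --- your own carry reasoning gives $|\lfloor A/u\rfloor + \lfloor B/u\rfloor| \le 2|A+B|$ irrespective of any guess --- so you could drop the outer $s$-loop, start at $u=m$ (a single bucket, trivial base case), and halve $u$ down to~$1$. The doubling wrapper is harmless but redundant; removing it collapses your algorithm to essentially the paper's. One minor point you gloss over: the trailing $O(\log m)$ factor is \emph{not} absorbed into $2^{O(\sqrt{\log|A+B|\log\log m})}$ when $|A+B| < \log m$; the paper handles this corner case by running the naive $\tilde O(|A|\cdot|B|)$ algorithm in parallel.
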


Note that bullet point (1) reproves Theorem~\ref{thm:sumset_randomized} by Cole and Hariharan~\cite{CH02}, and bullet point~(2) answers an open problem by Chan and Lewenstein~\cite{CL15}.

\begin{proof}
  First note that we can assume $m$ to be a power of 2. Indeed, if $m$ is not a power of 2, we let $m'$ be the smallest power of 2 greater than $2m$. Given $A,B \subseteq \{0,1,\ldots,m-1\}$ we compute $A+B$ over $\Z_{m'}$ and take the resulting set modulo $m$ to obtain $A+B$ over $\Z_m$. This assumption is not necessary, but shall make the exposition cleaner, avoiding using the ceil and floor functions.
  
  So assume that $m$ is a power of 2, and set $m' := m/2$.
  Let $A' := A \bmod m'$ and $B' := B \bmod m'$ and recursively compute $S := A' + B'$ over $\Z_{m'}$. 
  Then we have $S = (A+B) \bmod m'$. Thus, since $\max(A)+\max(B) < 2m \le 4m'$, the set $T := S + \{0,m',2m',3m'\}$ covers $A+B$. In other words, $T$~is a superset of $A+B$ over $\Z$. We can thus use the FFT Lemma to compute $A+B$ over $\Z$. Reducing the resulting set modulo $m$ yields $A+B$ over $\Z_m$. This leads to the recursive Algorithm~\ref{alg:deterministic_sumset}. 
  
  Since we can bound $|T| \le 4|S| \le 4|A+B|$, the expected running time of one recursive step is $O(|A+B| \cdot \polylog m)$, and there are $O(\log m)$ recursive steps. This yields the claimed randomized running time. For the deterministic variant we obtain running time $|A+B| \cdot 2^{O(\sqrt{\log |A+B| \log \log m})} \cdot \polylog m$ from the FFT Lemma, times an additional $\log m$ factor due to the recursion. 

Now, to get rid of the additional $\polylog(m)$ factor and obtain the promised guarantee, we shall observe the following. If $|A+B| \ge \log m$, then this running time is bounded by the claimed $|A+B| \cdot 2^{O(\sqrt{\log |A+B| \log \log m})}$. If $|A+B| < \log m$, then the naive approach which computes $A+B$ in time $\tilde O(|A| \cdot |B|) = \tilde O(|A+B|^2) = 2^{O(\sqrt{\log |A+B| \log \log m})}$. Running both algorithms in parallel until the first one finishes yields the claimed bound.
\end{proof}

\begin{algorithm}[!t]\caption{}\label{alg:deterministic_sumset}
\begin{algorithmic}[1]
\Procedure{\textsc{{DeterministicSumset}}}{$A,B,m$} \\
\Comment{$m$ is a power of 2; non-empty $A,B \subseteq \Z_m$; computes $A+B$ over $\Z_m$}
\State $m' := m/2$
\State $S \leftarrow \textsc{DeterministicSumset}(A \bmod m', B \bmod m', m')$
\State $T \leftarrow S + \{0,m',2m',3m'\}$ over $\Z$ \Comment{$T \supseteq A+B$ over $\Z$}
\State Compute $R := A+B$ over $\Z$ via the FFT Lemma using additional input $T$
\State \Return $R \bmod m$
\EndProcedure
\end{algorithmic}
\end{algorithm}

A similar result also holds for convolution of non-negative vectors. We denote by $\|x\|_0$ the number of non-zero entries of a vector $x$.

\begin{lemma} \label{lemma:nonneg_conv}
Given vectors $A,B \in \mathbb{R}_{\ge 0}^m$, we can compute their convolution $A \conv B$ (with wraparound)
\begin{enumerate}
\item[(1)] by a randomized Las Vegas algorithm in $O(\|A \conv B\|_0 \polylog m)$ expected time, or
\item[(2)] by a deterministic algorithm in $\|A \conv B\|_0 \cdot 2^{O(\sqrt{\log \|A \conv B\|_0 \log \log m})}$ time.
\end{enumerate}
\end{lemma}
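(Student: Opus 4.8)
The plan is to transcribe Algorithm~\ref{alg:deterministic_sumset} (and the proof of Lemma~\ref{lemma:sumset_both}) from sets to non-negative vectors, replacing ``reduce a set modulo $m'$'' by ``fold a vector onto $\{0,1,\ldots,m'-1\}$''. The only conceptual addition is the observation that non-negative inputs admit no cancellation, so $\supp(A \conv B) = \supp(A) + \supp(B)$ in the underlying group; in particular $\|A\conv B\|_0$ is exactly the size of the sumset of the supports, so all running times below can be charged to the output size.

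Concretely: assume $A,B$ are not identically zero (otherwise $A\conv B = 0$) and, as in Lemma~\ref{lemma:sumset_both}, assume $m$ is a power of $2$ (otherwise pad to the next power of $2$ above $2m$ and unfold at the end); we work over $\Z$ until the last step. Set $m' := m/2$ and let $A',B' \in \R_{\ge 0}^{m'}$ be the folded vectors, $A'[j] := A[j] + A[j+m']$ and similarly for $B'$. Recursively compute $S := A' \conv B'$ over $\Z_{m'}$ as a vector. A one-line calculation shows $S[j] = \sum_{k \equiv j \bmod m'} (A\conv B)[k]$ (convolution over $\Z_m$), so by non-negativity $\supp(S) = \supp(A\conv B) \bmod m'$ and $|\supp(S)| \le \|A\conv B\|_0$. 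Since $\supp(A \conv_{\Z} B) \subseteq \{0,\ldots,2m-2\} \subseteq \{0,\ldots,4m'-1\}$, the set $T := \supp(S) + \{0,m',2m',3m'\}$ is a superset of $\supp(A\conv_{\Z} B)$ with $|T| \le 4\|A\conv B\|_0$. Now invoke the FFT Lemma of Chan and Lewenstein in the form that, given non-negative vectors $A,B$ and a superset $T$ of $\supp(A\conv_{\Z} B)$, returns the vector $A\conv_{\Z} B$ — the construction of~\cite[Section 8]{CL15} provides this with the same running time as in Lemma~\ref{lemma:chan_lewenstein}, i.e.\ $O(|T|\polylog m)$ expected or $|T| \cdot 2^{O(\sqrt{\log |T| \log\log m})}$ deterministically — and finally reduce the output modulo $m$, summing colliding coefficients, to obtain $A\conv B$ over $\Z_m$.

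The running-time analysis is verbatim the one in Lemma~\ref{lemma:sumset_both}: there are $O(\log m)$ levels of recursion, at each level the folded output has support size at most $\|A\conv B\|_0$ so the hint set has size $|T| \le 4\|A\conv B\|_0$, and hence each level costs $\|A\conv B\|_0 \cdot 2^{O(\sqrt{\log\|A\conv B\|_0 \log\log m})}$ deterministically (resp.\ $O(\|A\conv B\|_0 \polylog m)$ Las Vegas). Summing over levels gives $\|A\conv B\|_0 \cdot 2^{O(\sqrt{\log\|A\conv B\|_0 \log\log m})} \cdot \polylog m$ and $O(\|A\conv B\|_0 \polylog m)$. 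The extra $\polylog m$ factor in the deterministic bound is removed exactly as in Lemma~\ref{lemma:sumset_both}: run in parallel the naive algorithm, which computes $A\conv B$ in time $\widetilde{O}(\|A\|_0 \cdot \|B\|_0) = \widetilde{O}(\|A\conv B\|_0^2)$ (using $\|A\|_0,\|B\|_0 \le \|A\conv B\|_0$), and note this beats the target bound whenever $\|A\conv B\|_0 < \log m$.

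The step that needs care — and the only place the argument genuinely departs from the Boolean case — is invoking the FFT Lemma for \emph{weighted} rather than Boolean convolution: one has to verify that the hint-set machinery of~\cite[Section 8]{CL15} recovers the actual coefficients of $A\conv_{\Z} B$, not merely its support, within the claimed time. This is exactly where non-negativity is used, both to ensure that the folded support cleanly yields a hint set of size $O(\|A\conv B\|_0)$ and to guarantee that no cancellation can hide an output coefficient.
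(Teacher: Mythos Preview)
Your proof is correct, but the paper takes a more modular route. Rather than rewriting the recursion of Algorithm~\ref{alg:deterministic_sumset} to carry \emph{weights} through every level, the paper simply invokes Lemma~\ref{lemma:sumset_both} as a black box on the supports $I := \supp(A)$ and $J := \supp(B)$ to obtain the set $I+J$ (noting $|I+J| = \|A\conv B\|_0$), and then makes a \emph{single} call to the weighted variant of the FFT Lemma \cite[Remark~8.2]{CL15} with hint set $T = I+J$ to recover the actual coefficients of $A\conv B$. The polylog cleanup is identical to yours.

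Both approaches rely on the same key external ingredient (the weighted FFT Lemma from \cite[Remark~8.2]{CL15}, which you correctly flag as the only delicate step). The paper's version is cleaner because it separates ``compute the support'' from ``compute the values given the support'', reusing Lemma~\ref{lemma:sumset_both} verbatim and calling the weighted lemma only once; your version computes full weighted convolutions at every recursion level, which is unnecessary work (only the supports are needed until the top level) though it does not affect the asymptotic bound.
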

Again bullet point (1) reproves a result by Cole and Hariharan~\cite{CH02}, and bullet point~(2) proves Theorem~\ref{thm:sumset_deterministic}.
\begin{proof}
  Denote by $I$ and $J$ the indicator vectors of the non-zero entries of $A$ and $B$, respectively. Observe that $|I+J| = \|A\star B\|_0$. We can thus compute $I+J$ in expected time $O(\|A\star B\|_0 \polylog m)$ by Lemma~\ref{lemma:sumset_both}. We now make use of a variant of the FFT Lemma from~\cite[Remark 8.2]{CL15}, stating that if we know a superset $T \supseteq I+J$ then we can compute $A \conv B$ in expected time $O(|T| \polylog m)$. Using this for $T=I+J$ yields expected time $O(\|A\star B\|_0 \polylog m)$, or time $\|A\star B\|_0 \cdot 2^{O(\sqrt{\log \|A\star B\|_0 \log \log m})} \cdot \poly(\log m)$ for the deterministic variant. Now, we can get rid of the $\polylog(m)$ factors in the deterministic variant using the same argument as the one in Lemma~\ref{lemma:sumset_both}.
\end{proof}

\section{Computing the Symmetry Group}
\label{sec:symmetry_group}

In this section, we show how to compute the symmetry group $Sym(A)$ for any given non-empty set $A \subseteq \Z_m$ in time $O(|A|)$, proving Theorem \ref{thm:symmetry_group}.
Let $n := |A|$ and denote by $a_1 < a_2 < \ldots < a_n$ the elements of~$A$. For simplicity of notation, we set 
\[ a_{n+1} := a_1, \;\; a_{n+2} := a_2, \;\; \ldots, \;\; a_{2n} := a_n. \]
Note that for our applications of this Theorem, $a_i$ correspond to residue classes modulo $m$.
We construct a string $P$ (the pattern) of length $n$ by setting for any $1 \le i \le n$:
\[ P_i := (a_{i+1} - a_i) \bmod m. \]

Similarly, we construct a string $T$ (the text) of length $2n-1$ by setting for any $1 \le i \le 2n-1$:
\[ T_i := (a_{i+1} - a_i) \bmod m. \]
Note that the text is constructed by repeating the pattern twice and removing the last letter. 
%
%

%

We say that there is a match of pattern $P$ in text $T$ at position $i$ if $P_j = T_{i-1+j}$ holds for any $1 \le j \le n$. The following lemma shows that the matches of $P$ in $T$ are in one-to-one correspondence with the symmetry group $Sym(A)$. Since all matches of $P$ in $T$ can be computed in time $O(n)$ by the classic Knuth-Morris-Pratt pattern matching algorithm, this finishes the proof of Theorem~\ref{thm:symmetry_group}.
%

\begin{lemma}
If there is a match of $P$ in $T$ at position $i$, then $a_i - a_1 \in Sym(A)$. Moreover, for any $x \in Sym(A)$, we have $x = a_i-a_1$ for some $1 \le i \le n$ and there is a match of $P$ in $T$ at position $i$.
\end{lemma}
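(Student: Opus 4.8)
The plan is to establish the claimed bijection between matches of the pattern $P$ in the text $T$ and elements of $Sym(A)$, exploiting the fact that the ``difference encoding'' $P$ is precisely the cyclic sequence of gaps of $A$, and that $h \in Sym(A)$ means exactly that $A + \{h\} = A$, i.e. the gap sequence of $A$ is invariant under a cyclic shift by the appropriate amount.

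First I would prove the forward direction: suppose there is a match of $P$ in $T$ at position $i$, so $P_j = T_{i-1+j}$ for all $1 \le j \le n$. Unwinding the definitions, $T_{i-1+j} = (a_{i+j} - a_{i-1+j}) \bmod m$ (using the extended indexing $a_{n+1}, \dots, a_{2n}$, which is legitimate since $i-1+j$ ranges over $i, \dots, i+n-1 \le 2n-1$), while $P_j = (a_{j+1} - a_j) \bmod m$. So the match says $a_{i+j} - a_{i-1+j} \equiv a_{j+1} - a_j \pmod m$ for every $j$; in other words, the map $a_j \mapsto a_j + (a_i - a_1)$ sends consecutive gaps to consecutive gaps. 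A short telescoping argument then shows $a_{i-1+j} + (a_{j+1} - a_j) \equiv a_{i+j}$, and by induction on $j$ that $a_j + (a_i - a_1) \equiv a_{i-1+j} \pmod m$ for all $1 \le j \le n$. Since $\{a_{i}, a_{i+1}, \dots, a_{i+n-1}\} \bmod m = \{a_1, \dots, a_n\} = A$ (the extended sequence is just a cyclic reindexing of $A$), this shows $A + \{a_i - a_1\} = A$, i.e. $a_i - a_1 \in Sym(A)$.

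Next I would prove the reverse direction. Let $x \in Sym(A)$, so $A + \{x\} = A$. Then $a_1 + x \bmod m$ lies in $A$, so $a_1 + x \equiv a_i \pmod m$ for a unique $i$ with $1 \le i \le n$, giving $x = a_i - a_1$ (mod $m$). Because $A + \{x\} = A$ and addition of $x$ is an order-preserving-up-to-cyclic-rotation bijection on the sorted cyclic sequence, it maps $a_j$ to $a_{i-1+j}$ (with extended indexing) for every $j$ — this follows from the fact that $a_1 \mapsto a_i$ together with the bijectivity and the cyclic structure: the translated sequence $a_1 + x < a_2 + x < \dots$ (read cyclically) must coincide with the cyclic list starting at $a_i$. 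From $a_j + x \equiv a_{i-1+j}$ for all $j$ we get $a_{j+1} - a_j \equiv a_{i+j} - a_{i-1+j} \pmod m$, i.e. $P_j = T_{i-1+j}$ for $1 \le j \le n$, which is exactly a match of $P$ in $T$ at position $i$. (One must check $i \le n$ so that all the text indices $i-1+j$ stay within $1, \dots, 2n-1$, which holds since $1 \le i \le n$.)

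The main obstacle — and the step I would be most careful about — is pinning down precisely why $h \in Sym(A)$ forces the translation to act as the cyclic shift $a_j \mapsto a_{i-1+j}$ on the sorted order, rather than some other permutation of $A$. The clean way is to observe that adding a constant $h$ modulo $m$ preserves the cyclic order of any subset of $\Z_m$; hence it maps the sorted cyclic sequence $(a_1, \dots, a_n)$ to another sorted cyclic sequence which, being equal as a set to $A$, must be a cyclic rotation of $(a_1, \dots, a_n)$; and the rotation is determined by the image of any single element, here $a_1 \mapsto a_i$. I would also double-check the boundary bookkeeping with the extended indices $a_{n+1}, \dots, a_{2n}$ and make sure the ``remove the last letter'' truncation of $T$ does not interfere (it does not, since a match at position $i \le n$ only reads $T_i, \dots, T_{i+n-1}$, and $T_{2n}$ is never needed). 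Everything else is routine telescoping and modular arithmetic.
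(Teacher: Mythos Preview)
Your proposal is correct and follows essentially the same approach as the paper: the forward direction is the identical telescoping argument, and for the reverse direction both you and the paper argue that translation by $x$ (the paper uses $-x$, which is equivalent since $Sym(A)$ is a group) acts as a cyclic rotation on the sorted sequence $(a_1,\dots,a_n)$, determined by the image of $a_1$. The paper makes the cyclic-order-preservation step concrete by noting that the translated sequence is monotonically increasing except for one wrap-around, which is exactly your ``preserves cyclic order'' observation spelled out.
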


\begin{proof}
Note that there is a match at position $i$ if and only if for all $1 \leq j \leq n$
\[
a_{j+1} - a_j = a_{i+j} - a_{i+j-1} \pmod m.
\]
Summing this equation in a telescoping sum over all $j \in \{1,\ldots,\ell-1\}$, for any fixed $1 \le \ell \le n$, yields
\[ a_{\ell} - a_1 = a_{i+\ell-1} - a_i \pmod m, \]
or, equivalently,
\[ a_{\ell} + (a_i - a_1) = a_{i+\ell-1} \pmod m. \]
This establishes $a_i - a_1 \in Sym(A)$. 

For the second part, recall that $Sym(A) \subseteq A - \{a_1\}$, as discussed in Section~\ref{sec:toolkit_sym}. Therefore for any $x \in Sym(A)$ we have $x = a_i  - a_1$ for some $i$. 
Now consider the values $a'_j := (a_j - x) \bmod m$ for $1 \le j \le n$. Observe that the sequence $a'_1,\ldots,a'_n$ is monotonically increasing up to some point, where the modulo operation reduces by an additional $-m$, and then is again monotonically increasing. In particular, for some $1 \le r \le n$ we have
\[ a'_r < a'_{r+1} < \ldots < a'_{n-1} < a'_n < a'_1 < a'_2 < \ldots < a'_{r-2} < a'_{r-1}. \]

Since $x \in Sym(A)$ and $Sym(A)$ is a group, also $-x \in Sym(A)$, and thus $a'_j \in A$ for all~$j$. Hence, the $n$ different values $a'_j$ must correspond to the elements of $A$. 
It follows that $a'_{r-1+j} = a_j$ for all $1 \le j \le n$.

Observing that $a'_i = a_i - x = a_i - (a_i - a_1) = a_1 \pmod m$, we see that $r = i$. 
In other words, we have for all $1 \le j \le n$
\[ a_{i-1+j} - (a_i - a_1) = a_j \pmod m. \]
Subtracting this equation for $j$ from this equation for $j+1$ yields, for any $1 \le j \le n$,
\[ a_{i+j} - a_{i+j-1} = a_{j+1} - a_j \pmod m. \]
As noted in the beginning of this proof, this means that there is a match of $P$ in $T$ at position~$i$.
\end{proof}


\bibliographystyle{alpha}
\bibliography{ref}


\end{document}